\def\idty{{\mathchoice {\mathrm{1\mskip-4mu l}} {\mathrm{1\mskip-4mu l}} %
{\mathrm{1\mskip-4.5mu l}} {\mathrm{1\mskip-5mu l}}}}
\newtheorem{theorem}{Theorem}[section]
\newtheorem{lemma}[theorem]{Lemma}
\newtheorem{remark}[theorem]{Remark}
\newtheorem{corollary}[theorem]{Corollary}
\newcommand{\bR}{{\mathbb R}}
\newcommand{\bC}{{\mathbb C}}
\newcommand{\bZ}{{\mathbb Z}}
\newcommand{\cA}{{\cal A}}
\newcommand{\cB}{{\cal B}}
\newcommand{\cH}{{\cal H}}
\newcommand{\rd}{{\rm d}}
\newcommand{\supp}{\operatorname{supp}}
\newcommand{\cU}{{\cal U}}
\newcommand{\cL}{{\cal L}}
\begin{document}

\title{Lieb-Robinson Bounds \\[5pt]
for Harmonic and Anharmonic Lattice Systems}

\author{Bruno Nachtergaele${}^1$, Hillel Raz${}^1$, Benjamin Schlein${}^2$, \, and Robert Sims${}^3$\\[5pt]
bxn@math.ucdavis.edu, hraz@math.ucdavis.edu\\ schlein@math.lmu.de, rsimsdrb@gmail.com
\\[15pt]
1. Department of Mathamatics, University of California at Davis, \\
Davis, CA 95616, USA\\
\\
2. Institute of Mathematics, University of Munich, \\
Theresienstr. 39, D-80333 Munich, Germany \\ \\
3. Faculty of Mathematics, University of Vienna \\
Nordbergstr. 15, A-1090 Vienna, Austria}

\maketitle

\begin{abstract}
We prove Lieb-Robinson bounds for systems defined on infinite dimensional
Hilbert spaces and described by unbounded Hamiltonians. In particular,
we consider harmonic and certain anharmonic lattice systems.
\end{abstract}

\renewcommand{\thefootnote}{$ $}
\footnotetext{Copyright \copyright\ 2008 by the authors. This article may be
reproduced in its entirety for non-commercial purposes.}
\section{Introduction}

An important class of systems in statistical mechanics is described
by the (an)harmonic lattice Hamiltonians, which have a continuous degree
of freedom, thought of as a particle trapped in a potential, at each site
of a lattice. The particles interact by a linear or non-linear force.
For example, such models are thought to describe the emergence of macroscopic
non-equilibrium phenomena, such as heat conduction, from many-body Hamiltonian
dynamics \cite{spohn2006,aoki2006}, the understanding of which is one of the
long-standing open problems in mathematical statistical mechanics
\cite{bonetto2000}.

In terms of technical difficulty, lattice oscillator models are intermediate
between spin systems, where the degrees of freedom, each described by
a finite-dimensional Hilbert space, are labeled by a discrete set, usually
a lattice such as $\mathbb{Z}^\nu$, on the one hand, and particles in
continuous space, which necessarily have an infinite-dimensional state space,
on the other hand. Even in the classical case lattice oscillator systems
are significantly more difficult to study than spin systems, and also for
them more is known than for particle models in the continuum. E.g., the
existence of the dynamics in the thermodynamics limit was studied by Lanford,
Lebowitz, and Lieb in \cite{LLL77}.

In this paper we focus on an essential locality property of the dynamics of
quantum harmonic and anharmonic lattice models. Since these are non-relativistic
models there is no {\em a priori} bound on the speed of propagation of
signals in these systems. In the case of quantum spin systems with finite-range
interactions, Lieb and Robinson \cite{LR1} showed that there is nevertheless an upper
bound on the speed of propagation in the sense that disturbances in the system
remain confined in a ``light'' cone up to small corrections that decay
at least exponentially fast away from the light cone. This is the so-called
Lieb-Robinson bound which is an upper bound on the speed of propagation.

In the past few years several generalizations, improvements, and applications
of Lieb-Robinson type bounds have appeared. This work can be regarded
as one further extension, going for the first time beyond the realm of
quantum spin systems. Here, by quantum spin system we mean any quantum system
with a finite dimensional Hilbert space of states. For example, a quantum
spin system over a finite subset $\Lambda \subset \bZ^{\nu}$ is described on
the Hilbert space
\[
\cH_{\Lambda} = \bigotimes_{x\in \Lambda} \cH_x \qquad \text{with }
\cH_x = \bC^{n_x}
\]
where the dimensions $2 \leq n_x < \infty$ are related to the magnitude of the
spin at site $x \in \Lambda$. The algebra of observables for this quantum spin
system is then given by
\[
\cA_{\Lambda} = \bigotimes_{x \in \Lambda} \cB (\cH_x) = \cB ( \cH_{\Lambda})
\]
where $\cB (\cH_x)$ is the space of bounded operators on $\cH_x$ (that is the
space of all $n_x \times n_x$ matrices). The Hamiltonian of the quantum spin
system is usually written in the form
\[
H_{\Lambda} = \sum_{X \subset \Lambda} \Phi (X)
\]
where the interaction $\Phi: 2^{\Lambda} \to \cA_{\Lambda}$ is such that
$\Phi(X)^* = \Phi (X) \in \cA_X = \otimes_{x \in X} \cB (\cH_x)$ for all
$X \subset \Lambda$. The time evolution associated with the Hamiltonian
$H_{\Lambda}$ is then the one-parameter group of automorphisms
$\{\tau^{\Lambda}_t \}_{t \in \bR}$ defined by
\[
\tau_t^{\Lambda} (A) = e^{itH_{\Lambda}} A e^{-itH_{\Lambda}}
\qquad \text{for all } A \in \cA_{\Lambda} \,.
\]

For such systems, under appropriate conditions on the interactions $\Phi (X)$
 (short-range conditions) it was first proved by Lieb and Robinson in
\cite{LR1}, that, given $A \in \cA_X$, $B \in \cA_Y$,
\begin{equation}  \label{eq:LR1}
\| [ \tau^{\Lambda}_t (A), B ] \|  \leq C \| A \| \| B \| \;
e^{- \mu(d(X,Y) - v |t|)}
\end{equation}
where $d(X,Y) = \min_{x \in X, y \in Y} |x-y|$ and $|x| =\sum_{j=1}^{\nu} |x_j|$.
The physical interpretation of this inequality is
straightforward; if two observables $A$ and $B$ are supported in disjoint
regions, then even after evolving the observable $A$, apart from exponentially
small contributions, their supports remain essentially  disjoint up to times
$t \leq d(X,Y)/ v$. In other words, this bound asserts that the speed of
propagation of perturbations in quantum spin systems is bounded.

In the original proof of the Lieb-Robinson bounds (see \cite{LR1}), the constant
$C$ and the velocity $v$ on the right hand side of (\ref{eq:LR1}) depended
in a crucial way on $N = \text{max}_{x \in \Lambda} n_x $, the maximal dimension
of the different spin spaces. More recently, new Lieb-Robinson bounds of the form
(\ref{eq:LR1}) were derived with a constant $C$ and a velocity of propagation
$v$ independent of the dimension of the various spin spaces \cite{HK,NOS1}.
This new version of the Lieb-Robinson bounds allowed for new applications,
such as, among other results, a proof of the Lieb-Schutz-Mattis theorem in
higher dimension, see \cite{H,NS2}.

It seems natural to ask whether Lieb-Robinson bounds such as (\ref{eq:LR1})
can be extended to systems defined on infinite dimensional Hilbert spaces,
and described by unbounded Hamiltonians. Although the constant $C$ and the
velocity $v$ in (\ref{eq:LR1}) are independent of the dimension
of the spin spaces, they depend on the operator norm of the interactions
$\Phi (X)$; for this reason, if one deals with unbounded Hamiltonians, the
methods developed in \cite{NOS1,NS1, HK} cannot be applied directly.
Nevertheless, in the present paper we prove that Lieb-Robinson bounds
can be established for three different types of models with unbounded
Hamiltonians, which we now present. For the precise statements see
Sections \ref{sec:phi}, \ref{sec:harm}, and \ref{sec:anharm}.

First, in Section \ref{sec:phi}, we consider systems defined on
an infinite dimensional Hilbert space by Hamilton operators with possibly
unbounded on-site terms but bounded interactions between sites. In this
case, we show that the analysis of \cite{NOS1} goes through with only minor
changes, and that Lieb-Robinson bounds can be proven in quite a large
generality (see Theorem \ref{thm:phi}). A class of interesting examples
of this are lattice oscillators coupled by bounded interactions.
For a finite subset $\Lambda \subset \bZ^{\nu}$, one considers the system defined on
the
Hilbert space $\cH_{\Lambda} = \bigotimes_{x \in \Lambda} L^2 ( \bR, \rd q_x)$ by the Hamiltonian
\[
H = \sum_{x \in \Lambda} p_x^2 + V (q_x) + \sum_{x,y\in\Lambda,\, |x-y| = 1} \phi (q_x -q_y)
\]
where $p_x = -i \, \rd / \rd q_x$, the real function $V$ is such that $-\Delta_q + V (q)$ is a
self-adjoint operator, and $\phi \in L^{\infty} (\bR)$ is real valued. Another
commonly studied model that satisfies the conditions of this result is the so-called
quantum rotor Hamiltonian of the form
\begin{equation*}
H = -\sum_x \frac{\partial^2}{\partial\theta_x^2} +\sum_{x,y} J_{xy}
\cos(\theta_x - \theta_y +\phi)\,
\end{equation*}
where $\theta_x$ is the angle associated with the rotor at site $x$, and $J_{xy}$
are coupling constants assumed to vanish whenever $|x-y|$ exceeds a finite
range $R$. Quantum rotor Hamiltonians are used to study a variety of physical situations
such as Josephson junction arrays \cite{alsaidi2003}, the Bose-Hubbard model \cite{polak2007},
and crystals consisting of molecules with rotor degrees of freedom \cite{gregor2006}.

Second, in Section~\ref{sec:harm}, we consider harmonic lattice systems
for which the Hamiltonian describes a system of linearly coupled harmonic
oscillators situated at the points of a finite subset $\Lambda \subset \bZ^{\nu}$.
The standard Hamiltonian is of the form
\begin{equation*}
H^h \, =\,  \sum_{ x} p_{ x }^2 \, +\, \omega^2 \, q_{ x}^2 \, + \,
\sum_{|x-y|=1}\sum_{j = 1}^{\nu}  \lambda_j \, (q_{ x } - q_{ y})^2,
\end{equation*}
defined on a finite hypercube in $\mathbb{Z}^\nu$, with periodic boundary
conditions.
In this case, not only the on-site terms but also the interactions between
sites are given by unbounded operators, and the analysis of \cite{NOS1} cannot
be applied. As is well-known, the time evolution for harmonic systems can be
computed explicitly (see Lemma~\ref{lem:weylevo}), and the derivation of
Lieb-Robinson bounds (in the form given in Theorem~\ref{thm:harlrb}) reduces
to the study of the asymptotic properties of certain Fourier sums
(see Lemma~\ref{lm:htx}).

Finally, in Section~\ref{sec:anharm}, we consider local anharmonic perturbations
of the harmonic lattice system of the form
\begin{equation*}
H = \sum_{x} p_x^2 \, + \, \omega^2 \, q_x^2 \, +
\, \sum_{|x-y|=1} \sum_{j = 1}^{\nu}  \lambda_j \, (q_{ x } - q_{ y})^2 \, +\,
\sum_{x} V (q_x)\,.
\end{equation*}
Assuming that the local perturbation $V$ is sufficiently weak (in an appropriate sense),
and making use of an interpolation argument between the harmonic and the anharmonic time-
evolution, we derive Lieb-Robinson bounds in Theorem~\ref{thm:anharm}.

Next, we discuss the classes of observables for which we obtain the Lieb-Robinson
bounds in each of the three types of models. In the case of quantum spin
systems, i.e., the case where the Hilbert spaces associated with a lattice site
are all finite-dimensional, one proves Lieb-Robinson bounds for a pair of
arbitrary observables $A$ and $B$ with finite supports (see (\ref{eq:LR1})).
It is not clear in general that such a result should be expected when the Hilbert
spaces are infinite-dimensional and the Hamiltonians unbounded. If the
unboundedness in the Hamiltonian is restricted to on-site terms while
interactions between sites are bounded and of sufficiently short range,
the standard Lieb-Robinson bound can be derived for arbitrary bounded
observables. This is explained in Section \ref{sec:phi}. The novelty of
this paper concerns harmonic and anharmonic lattice systems which have
unbounded interactions of the form $(q_x-q_y)^2$. In Section~\ref{sec:harm}
and Section~\ref{sec:anharm} we prove Lieb-Robinson bounds for Weyl operators.
The main advantage of working in the Weyl algebra is a consequence of the fact
that the class of Weyl operators is invariant under the dynamics of the
harmonic lattice model, a property that is also used in our treatment of
anharmonic models. The Lieb-Robinson bounds that we obtain for the Weyl
operators are sufficient to derive bounds for more general observables, such
as $q_x$ and $p_x$ as well as compactly supported smooth bounded functions
of $q_x$ and $p_x$. This is discussed in Section \ref{sec:disc}.

Note that locality bounds for harmonic and anharmonic lattice systems have already
been obtained in the classical setting; while harmonic systems are well-understood,
anharmonic lattice systems are much more complicated, and a full understanding,
even in the classical case, has not been reached, yet. In \cite{MPPT}, Marchioro,
Pellegrinotti, Pulvirenti, and Triolo considered anharmonic systems in thermal
equilibrium and proved that, after time $t$, the influence of local perturbations
becomes negligible at distances larger than $t^{4/3}$. These bounds were recently
improved in \cite{BCDM} by Butt\`a, Caglioti, Di Ruzza, and Marchioro, who proved
that after time $t$ local perturbations of thermal equilibrium are exponentially
small in $\log^2 t$ at distances larger than $t \log^{\alpha} t$.

In the quantum mechanical setting, on the other hand, we are only aware of the
recent work of Buerschaper, who derived, in \cite{B}, Lieb-Robinson type bounds for
harmonic lattice systems.

\section{Lieb-Robinson Estimates for Hamiltonians with Bounded Non-Local Terms}
\label{sec:phi}

\setcounter{equation}{0}

In this section, we will state and prove our first example of Lieb-Robinson
estimates for systems with unbounded Hamiltonians. We consider here the dynamics
generated by unbounded Hamiltonians, assuming, however, the unbounded interactions
to be completely local. It turns out that, for such systems, locality bounds can be
proven in the same generality as for quantum spin systems (see Theorem
\ref{thm:phi} below). Moreover, the proof of this result only requires minor
modifications with respect to the arguments presented in \cite{NOS1}.

We first introduce the underlying structure on which our models will be defined.
Let $\Gamma$ be an arbitrary set of sites equipped with a metric $d$.
For $\Gamma$ with infinite cardinality, we will need to assume that there exists a
non-increasing function $F: [0, \infty) \to (0, \infty)$ for which:

\noindent i) $F$ is uniformly integrable over $\Gamma$, i.e.,
\begin{equation} \label{eq:fint}
\| \, F \, \| \, := \, \sup_{x \in \Gamma} \sum_{y \in \Gamma}
F(d(x,y)) \, < \, \infty,
\end{equation}

\noindent and

\vspace{.3cm}

\noindent ii) $F$ satisfies
\begin{equation} \label{eq:intlat}
C \, := \, \sup_{x,y \in \Gamma} \sum_{z \in \Gamma}
\frac{F \left( d(x,z) \right) \, F \left( d(z,y)
\right)}{F \left( d(x,y) \right)} \, < \, \infty.
\end{equation}

Given such a set $\Gamma$ and a function $F$,  it is easy to see that
for any $a \geq 0$ the function
\begin{equation*}
F_a(x) = e^{-ax} \, F(x),
\end{equation*}
also satisfies i) and ii) above with $\| F_a \| \leq \| F \|$ and $C_a \leq C$.

In typical examples, one has that $\Gamma  \subset \mathbb{Z}^{\nu}$ for
some integer $\nu \geq 1$, and the metric is
just given by $d(x,y) = |x -  y|=\sum_{j=1}^{\nu} |x_j - y_j|$.
In this case, the function $F$ can be
chosen as $F(|x|) = (1 + |x|)^{- \nu - \epsilon}$ for any $\epsilon >0$.

To each $x \in \Gamma$, we will associate a Hilbert space $\mathcal{H}_x$.
Unlike in the setting of quantum spin systems,
we will not assume that these Hilbert spaces are finite dimensional.
For example, in many relevant systems, one considers
$\mathcal{H}_x = L^2( \mathbb{R}, \rd q_x)$.
With any finite subset $\Lambda \subset \Gamma$,
the Hilbert space of states over $\Lambda$ is given by
\begin{equation*}
\mathcal{H}_{\Lambda} \, = \, \bigotimes_{x \in \Lambda} \mathcal{H}_x,
\end{equation*}
and the local algebra of observables over $\Lambda$ is then defined to be
\[
\mathcal{A}_{\Lambda} = \bigotimes_{x \in \Lambda} \cB (\cH_x),
\]
where $\cB (\cH_x)$ denotes the algebra of bounded linear operators on $\cH_x$.

If $\Lambda_1 \subset \Lambda_2$, then there is a natural way of identifying
$\mathcal{A}_{\Lambda_1} \subset \mathcal{A}_{\Lambda_2}$, and (also in the
case of infinite $\Gamma$) we may therefore define the algebra of
local observables by the inductive limit
\begin{equation*}
\mathcal{A}_{\Gamma} \, = \, \bigcup_{\Lambda \subset \Gamma} \mathcal{A}_{\Lambda},
\end{equation*}
where the union is over all finite subsets $\Lambda \subset \Gamma$; see
\cite{bratteli1987,bratteli1997} for a general discussion of these topics.

For the locality results we wish to describe, the notion of support of
an observable will be important. The support of an observable
$A \in \mathcal{A}_{\Lambda}$ is the
minimal set $X \subset \Lambda$ for which $A = A' \otimes \idty$ for some
$A' \in \mathcal{A}_X = \bigotimes_{x \in X} \cB (\cH_x)$.

The result discussed in this section corresponds to bounded perturbations of
local self-adjoint Hamiltonians. We fix a collection of local operators
$H^{\rm loc} = \{ H_x \}_{x \in \Gamma}$ where each $H_x$ is a self-adjoint
operator over $\mathcal{H}_x$. Again, we stress that these
operators $H_x$ need {\it not} be bounded.

In addition, we will consider a general class of bounded perturbations.
These are defined in terms of an interaction $\Phi$, which is a map from the
set of subsets of $\Gamma$ to $\mathcal{A}_{\Gamma}$ with the property that
for each finite set $X \subset \Gamma$, $\Phi(X) \in \mathcal{A}_X$ and
$\Phi(X) ^*= \Phi(X)$. To obtain our bound, we need to impose a growth
restriction on the set of interactions $\Phi$ we consider.
For any $a \geq 0$, denote by $\mathcal{B}_a(\Gamma)$ the set of interactions
for which
\begin{equation} \label{eq:defphia}
\| \Phi \|_a \, := \, \sup_{x,y \in \Gamma}  \frac{1}{F_a (d(x,y))} \,
\sum_{X \ni x,y} \| \Phi(X) \| \, < \, \infty.
\end{equation}

Now, for a fixed sequence of local Hamiltonians $H^{\rm loc} = \{H_x \}$, as
described above, an interaction $\Phi \in \mathcal{B}_a(\Gamma)$, and a finite subset
$\Lambda \subset \Gamma$, we will consider self-adjoint Hamiltonians of the form
\begin{equation} \label{eq:localham}
H_{\Lambda} \, = \, H^{\rm loc}_{\Lambda} \, + \, H^{\Phi}_{\Lambda} \, = \, \sum_{x \in \Lambda} H_x \, + \, \sum_{X \subset \Lambda} \Phi(X),
\end{equation}
acting on $\mathcal{H}_{\Lambda}$ (with domain given by $\bigotimes_{x \in \Lambda} D(H_x)$ where $D(H_x) \subset \cH_x$ denotes the domain of $H_x$). As these operators are self-adjoint, they generate a dynamics, or time evolution, $\{ \tau_t^{\Lambda} \}$,
which is the one parameter group of automorphisms defined by
\begin{equation*}
\tau_t^{\Lambda}(A) \, = \, e^{it H_{\Lambda}} \, A \, e^{-itH_{\Lambda}} \quad \mbox{for any} \quad A \in \mathcal{A}_{\Lambda}.
\end{equation*}
For Hamiltonians of the form (\ref{eq:localham}), we have a bound analogous to (\ref{eq:LR1}), see Theorem~\ref{thm:phi} below.

Before we present this result, we make an observation. It seems intuitively
clear that the spread of interactions through a system should
depend on the surface area of the support of the local observables being evolved;
not their volume. One can make this explicit by introducing the following
notation. Denote the surface of a set $X$, regarded as a
subset of $\Lambda \subset \Gamma$, by
\begin{equation} \label{eq:defsurf}
S_{\Lambda}(X) \, = \, \left\{ Z \subset \Lambda \, : \, Z \cap X \neq \emptyset
\mbox{ and }  Z \cap X^c \neq \emptyset \right\}.
\end{equation}
Here we will use the notation $S(X) =S_{\Gamma}(X)$, and define
the $\Phi$-boundary of a set $X$, written $\partial_{\Phi} X$, by
\begin{equation*}
\partial_{\Phi} X \, = \, \left\{ x \in X \, : \, \exists Z \in S(X)
  \mbox{ with } x \in Z \mbox{ and } \Phi(Z) \neq 0 \, \right\}.
\end{equation*}

We have the following result.

\medskip

\begin{theorem}\label{thm:phi} Fix a local Hamiltonian $H^{\rm loc}$ and an interaction $\Phi \in \mathcal{B}_a(\Gamma)$ for some
$a \geq 0$. Let $X$ and $Y$ be subsets of $\Gamma$. Then, for
any $\Lambda \supset X \cup Y$ and any pair of local observables $A \in \mathcal{A}_X$ and $B \in \mathcal{A}_Y$, one has that
\begin{equation} \label{eq:lrbd1}
\left\| [ \tau_t^{\Lambda}(A), B ] \right\| \, \leq \, \frac{2 \, \| A \|
\, \|B \|}{C_a} \, g_a(t) \, D_a(X,Y),
\end{equation}
where
\begin{equation} \label{eq:gat}
g_a(t) = \left\{ \begin{array}{cc} e^{2 \| \Phi \|_a C_a |t|} - 1 &
\mbox{if } d(X,Y) >0, \\ e^{2 \| \Phi \|_a C_a |t|} &
\mbox{otherwise,} \end{array} \right.
\end{equation}
and $D_a(X,Y)$ is given by
\begin{equation} \label{eq:defda}
D_a(X,Y) =
\min \left[ \sum_{x \in \partial_\Phi X} \sum_{y \in
  Y} \, F_a \left( d(x,y) \right), \sum_{x \in X} \sum_{y \in
  \partial_\Phi Y} \, F_a \left( d(x,y) \right)\right].
\end{equation}
\end{theorem}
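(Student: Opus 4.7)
The strategy is to conjugate away the unbounded local Hamiltonian by passing to the interaction picture relative to $H^{\rm loc}_\Lambda$, thereby reducing the problem to a dynamics generated by a bounded (though time-dependent) interaction, to which the argument of \cite{NOS1} applies essentially verbatim.

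\textbf{Interaction-picture reduction.} Since $H^{\rm loc}_\Lambda = \sum_{x \in \Lambda} H_x$ is a sum of self-adjoint operators supported at distinct sites, the unitary $U^{\rm loc}_t := \exp(it H^{\rm loc}_\Lambda)$ factorizes as $\bigotimes_{x \in \Lambda} e^{itH_x}$, and the automorphism $\alpha^{\rm loc}_t := U^{\rm loc}_t(\cdot)U^{\rm loc}_{-t}$ preserves both the support and the operator norm of every element of $\mathcal{A}_\Lambda$. Set $\tilde A(t) := \alpha^{\rm loc}_{-t}(\tau_t^\Lambda(A))$ and $\hat B_t := \alpha^{\rm loc}_{-t}(B) \in \mathcal{A}_Y$. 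A direct differentiation, using that $H^{\rm loc}_\Lambda$ commutes with $U^{\rm loc}_t$ on its natural domain, yields
\begin{equation*}
\frac{d}{dt}\tilde A(t) = i\bigl[\tilde H^\Phi(t), \tilde A(t)\bigr], \qquad \tilde A(0) = A,
\end{equation*}
with $\tilde H^\Phi(t) = \sum_{Z \subset \Lambda} \tilde\Phi_t(Z)$, $\tilde\Phi_t(Z) := \alpha^{\rm loc}_{-t}(\Phi(Z)) \in \mathcal{A}_Z$, and $\|\tilde\Phi_t(Z)\| = \|\Phi(Z)\|$. Because $\alpha^{\rm loc}_{-t}$ is an automorphism, $\|[\tau_t^\Lambda(A), B]\| = \|[\tilde A(t), \hat B_t]\|$, so the unbounded part has been absorbed into the change of frame without altering the quantity to be estimated.

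\textbf{NOS iteration.} Fixing $t$ and treating $\hat B_t$ as a constant observable of support $Y$ and norm $\|B\|$, let $G(s) := [\tilde A(s), \hat B_t]$ for $s \in [0, t]$. The Heisenberg equation together with the Jacobi identity gives
\begin{equation*}
G'(s) = i\bigl[\tilde H^\Phi(s), G(s)\bigr] - i\sum_{Z:\, Z \cap Y \neq \emptyset} \bigl[\tilde A(s), [\tilde\Phi_s(Z), \hat B_t]\bigr].
\end{equation*}
The first term is a commutator with a bounded self-adjoint operator and hence generates a norm-preserving infinitesimal rotation of $G(s)$; integrating and using $\|\tilde A(s)\| = \|A\|$ gives
\begin{equation*}
\|G(s)\| \leq \|G(0)\| + 2\|A\|\sum_{Z:\, Z \cap Y \neq \emptyset} \int_0^s \bigl\|[\tilde\Phi_{s'}(Z), \hat B_t]\bigr\|\, ds',
\end{equation*}
with $G(0) = [A, \hat B_t]$, vanishing when $d(X, Y) > 0$ and bounded by $2\|A\|\|B\|$ otherwise. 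This is precisely the recursion of \cite{NOS1}, now for a time-dependent but bounded interaction. Iterating it and invoking $\|\tilde\Phi_s(Z)\| = \|\Phi(Z)\|$ together with the assumptions (\ref{eq:fint}), (\ref{eq:intlat}), and (\ref{eq:defphia}) yields a bound of the form $\|G(t)\| \leq 2\|A\|\|B\|\,C_a^{-1}\,g_a(t)\sum_{x \in X}\sum_{y \in \partial_\Phi Y}F_a(d(x,y))$; running the symmetric iteration by peeling from $\tilde A$ instead of $\hat B_t$ produces the other element of the minimum in (\ref{eq:defda}).

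\textbf{Main obstacle.} The combinatorial iteration itself is already carried out in \cite{NOS1}; here it only has to be checked that time dependence of $\tilde\Phi_s$ causes no harm, which is immediate since the norm and support of $\tilde\Phi_s(Z)$ are inherited from $\Phi(Z)$. The one genuinely new technical point is the rigorous justification of the interaction-picture identities in the presence of the unbounded on-site operators $H_x$: one must verify that $\tilde A(t) \in \mathcal{B}(\mathcal{H}_\Lambda)$ with $\|\tilde A(t)\| = \|A\|$ and that the Heisenberg equation above holds in a suitable operator sense. Both points follow from the self-adjointness of each $H_x$ on its own domain together with the tensor-product form of $U^{\rm loc}_t$, after which the NOS iteration applies unchanged.
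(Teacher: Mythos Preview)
Your reduction to the interaction picture is sound, but the differential inequality you derive does \emph{not} iterate, so the claim that ``this is precisely the recursion of \cite{NOS1}'' is where the argument breaks. On the right-hand side of
\[
\|G(s)\| \leq \|G(0)\| + 2\|A\|\sum_{Z:\, Z \cap Y \neq \emptyset} \int_0^s \bigl\|[\tilde\Phi_{s'}(Z), \hat B_t]\bigr\|\, ds'
\]
the operator $\tilde\Phi_{s'}(Z) = \alpha^{\rm loc}_{-s'}(\Phi(Z))$ is only the \emph{free} evolution of $\Phi(Z)$, not the full interaction-picture evolution $\tilde\tau_{s'}(\Phi(Z))$. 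Since $\alpha^{\rm loc}$ preserves supports, $\tilde\Phi_{s'}(Z)\in\mathcal{A}_Z$ and $\hat B_t\in\mathcal{A}_Y$, so the commutator is simply bounded by $2\|\Phi(Z)\|\|B\|$ whenever $Z\cap Y\neq\emptyset$ and vanishes otherwise. There is nothing left to iterate: one step yields only the trivial linear-in-$t$ bound, not the exponential series that sums to $g_a(t)$ and produces the spatial decay.

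Compare with the paper's mechanism. There one studies $f(t)=[\tau_t(\tau_{-t}^{\rm loc}(A)),B]$ with $\tau^{\rm loc}$ generated by $H^{\rm loc}_\Lambda + H^\Phi_X$; differentiating leaves $\|[\tau_s(\Phi(Z)),B]\|$ on the right-hand side, where $\tau_s$ is the \emph{full} dynamics. That term is of the same type as the quantity being bounded, namely $\|\Phi(Z)\|\,C_B(Z,s)$, and the recursion closes. The essential structural point is that the inhomogeneous term must carry the full evolution, not merely the on-site evolution. Your Jacobi splitting puts the full evolution on $\tilde A(s)$ (which you then discard via $\|\tilde A(s)\|=\|A\|$) and leaves only the free evolution on the factor that survives into the integrand. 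To repair the argument within your interaction-picture framework you would still need to introduce, at each step, an auxiliary local dynamics that keeps $A$ (or its successors $\Phi(Z)$) inside a fixed support while the inhomogeneity inherits the full $\tilde\tau_s$---which is exactly what the paper's $\tau^{\rm loc}$ accomplishes in one stroke by bundling $H^{\rm loc}_\Lambda$ with $H^\Phi_X$.
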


The following corollary provides a bound in terms of $d(X,Y) = \min_{x\in X, y \in Y} d(x,y)$,
the distance between the supports $X,Y$.

\begin{corollary}\label{cor:phi}
Under the same assumptions as in Theorem \ref{thm:phi}, we have
\begin{equation} \label{eq:vel}
\left\| [ \tau_t^{\Lambda}(A), B ] \right\| \, \leq \, \frac{2 \, \| A \|
\, \|B \| \, \|F\|}{ C_a} \, \min \left[ \left|
    \partial_{\Phi}X \right|, \left| \partial_{\Phi}Y \right| \right] \, e^{- a \,\left[
 d(X,Y) - \frac{2 \| \Phi \|_a C_a}{a} |t| \right]},
\end{equation}
\end{corollary}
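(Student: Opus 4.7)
The plan is to derive the corollary directly from Theorem \ref{thm:phi} by estimating the two ingredients $g_a(t)$ and $D_a(X,Y)$ in turn and then combining them.

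First I would bound $D_a(X,Y)$. The key observation is that $F_a(r) = e^{-ar}F(r)$ and, for any $x \in X$ and $y \in Y$, one has $d(x,y) \geq d(X,Y)$ by definition of the set distance. Consequently, for any $x \in \partial_\Phi X \subset X$,
\begin{equation*}
\sum_{y \in Y} F_a(d(x,y)) \, = \, \sum_{y \in Y} e^{-a d(x,y)} F(d(x,y)) \, \leq \, e^{-a d(X,Y)} \sum_{y \in Y} F(d(x,y)) \, \leq \, e^{-a d(X,Y)} \|F\|,
\end{equation*}
where the last inequality uses the uniform integrability (\ref{eq:fint}) of $F$ over $\Gamma$. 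Summing over $x \in \partial_\Phi X$ gives the bound $|\partial_\Phi X| \|F\| e^{-a d(X,Y)}$, and the symmetric argument (swapping the roles of $X$ and $Y$) produces the same estimate with $|\partial_\Phi Y|$. Taking the minimum yields
\begin{equation*}
D_a(X,Y) \, \leq \, \|F\| \, \min\bigl[\,|\partial_\Phi X|,\, |\partial_\Phi Y|\,\bigr] \, e^{-a\, d(X,Y)}.
\end{equation*}

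Next I would handle $g_a(t)$. In the case $d(X,Y) > 0$, we simply use $e^{x} - 1 \leq e^x$ to get $g_a(t) \leq e^{2 \|\Phi\|_a C_a |t|}$; in the case $d(X,Y) = 0$ this is an equality. Inserting both estimates into (\ref{eq:lrbd1}) and absorbing the exponential in $|t|$ into the exponent along with $e^{-a d(X,Y)}$,
\begin{equation*}
e^{2 \|\Phi\|_a C_a |t|} \cdot e^{-a d(X,Y)} \, = \, \exp\!\left(-a\left[\, d(X,Y) - \tfrac{2 \|\Phi\|_a C_a}{a} |t|\,\right]\right),
\end{equation*}
produces exactly the expression (\ref{eq:vel}), completing the proof.

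There is essentially no conceptual obstacle here; the corollary is a bookkeeping consequence of Theorem \ref{thm:phi}. The only mild care required is to observe that the $d(X,Y) = 0$ branch of $g_a(t)$ is compatible with the unified form of (\ref{eq:vel}), since at $d(X,Y) = 0$ the right-hand side reduces to $e^{2 \|\Phi\|_a C_a |t|}$ up to the prefactor, matching the coarser bound from that branch.
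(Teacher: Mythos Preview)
Your proof is correct and matches the paper's approach: the paper simply remarks that inequality (\ref{eq:vel}) ``readily follows'' from (\ref{eq:lrbd1}), and your argument is precisely the intended derivation---factoring out $e^{-a\,d(X,Y)}$ from $F_a$, bounding the remaining $F$-sum by $\|F\|$ via (\ref{eq:fint}), and dominating $g_a(t)$ by the full exponential.
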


\begin{proof}[Proof of Theorem \ref{thm:phi}]
For any finite $Z \subset \Gamma$, we introduce the quantity
\begin{equation} \label{def:cbxt}
C_B(Z; t) \, := \, \sup_{A \in \mathcal{A}_Z} \frac{ \| [ \tau_t^{\Lambda}(A), B
  ] \| }{ \| A \|},
\end{equation}
and note that $C_B(Z; 0) \leq 2 \|B \| \delta_Y(Z)$, where we defined $\delta_Y (Z) = 1$ if $Y\cap Z \neq \emptyset$ and $\delta_Y (Z) = 0$ if $ Y\cap Z = \emptyset$.
A key observation in our proof will be the fact that the dynamics generated by
\[ H_{\Lambda}^{\text{loc}} \, + \, H_X^{\Phi} \, = \, \sum_{x \in \Lambda} H_x \, + \, \sum_{Z \subset X} \Phi(Z) \] remains local.
More precisely, if we define
\begin{equation}\label{eq:tauloc}
\tau_t^{\rm{loc}}( A) \, = \, e^{it \left(H^{\rm loc}_{\Lambda} + H_X^{\Phi} \right)} \, A \, e^{-it \left(H^{\rm loc}_{\Lambda} + H_X^{\Phi} \right)} \quad \mbox{for all} \quad A \in \mathcal{A}_{\Lambda},
\end{equation}
we have that for every $A \in \cA_X$, $\tau_t^{\rm{loc}} (A) \in \cA_X$ for every $t\in \bR$. This implies, recalling the definition (\ref{def:cbxt}), that
\begin{equation}
C_B(X; t) \, = \, \sup_{A \in \mathcal{A}_X} \frac{ \| [ \tau_t^{\Lambda}
(\tau_{-t}^{\text{loc}} (A)), B ] \| }{ \| A \|}\,.
\label{eq:cbx}\end{equation}

Consider the function (setting $\tau_t( \cdot) = \tau_t^{\Lambda}(\cdot)$)
\begin{equation*}
f(t) \, := \, \left[ \tau_t \left( \tau_{-t}^{\rm{loc}}(A) \right), B \right],
\end{equation*}
for $A \in \mathcal{A}_X$, $B \in \mathcal{A}_Y$, and $t \in
\mathbb{R}$. It is straightforward to verify that
\begin{equation} \label{eq:derf}
f'(t) = i \sum_{Z \in S_{\Lambda}(X)} \left[ \tau_t \left( \Phi(Z)
    \right), f(t) \right] - i  \sum_{Z \in \, S_{\Lambda}(X)} \left[
    \tau_t( \tau_{-t}^{{\rm loc}}(A)), \left[ \tau_t \left( \Phi(Z) \right), B \right] \right] .
\end{equation}
As is discussed in \cite[Appendix A]{NOS1}, the first term in the above differential
equation is norm preserving, and therefore we have the bound
\begin{equation} \label{eq:normpresbd}
\| f(t) \| \, \leq \, \| f(0) \| \, + \, 2 \| A \| \sum_{Z \in S(X)} \int_0^{|t|} \| [ \tau_s(\Phi(Z)), B ] \| ds.
\end{equation}
Recalling definition (\ref{def:cbxt}), the above inequality readily
implies that
\begin{equation} \label{eq:recurbd}
C_B(X,t) \leq C_B(X,0) + 2 \sum_{ Z \in \, S(X)} \| \Phi(Z) \| \int_0^{|t|} C_B(Z, s) ds,
\end{equation}
where we have used (\ref{eq:cbx}).
Iterating this inequality, exactly as is done in \cite{NOS1}, see also \cite{NSP},
yields (\ref{eq:lrbd1}) with (\ref{eq:gat}) and (\ref{eq:defda}).
The inequality (\ref{eq:vel}), stated in the corollary, readily follows.
\end{proof}

In many situations, $\Lambda \subset \bZ^{\nu}$ and the bound (\ref{eq:vel})
can be made slightly more explicit (but less optimal) by choosing
$$
F(x) = (1 +|x|)^{-\nu-1},\quad\mbox{and }  C= 2^{\nu+1}\sum_{x\in \mathbb{Z}^\nu}
\frac{1}{(1+\vert x\vert)^{\nu+1}}\, .
$$
In this case we have
\begin{equation}\label{eq:lrexp}
\| [ \tau_t^{\Lambda}(A), B ] \| \, \leq \, 2^{-(\nu+1)} \, \| A \| \| B \| \, \min[ | \partial_{\Phi}X|, | \partial_{\Phi} Y|] \, e^{- (ad(X,Y) -  2 \| \Phi \|_a C |t|)}. \end{equation}
for all $a >0$, with
\[ \| \Phi \|_a = \sup_{x,y \in \Lambda} e^{a|x-y|} (1 + |x - y|)^{\nu+1} \, \sum_{X \ni x,y} \| \Phi (X) \| < \infty \,. \]
Eq. (\ref{eq:lrexp}) gives the upper bound $2\| \Phi \|_a C / a$ for the speed of propagation in these systems.

One application of the general framework used in Theorem \ref{thm:phi}
concerns systems comprised of finite clusters with possibly unbounded interactions
within each cluster but only bounded interactions between clusters.
For such systems, by adjusting $\Gamma$ and $d(x,y)$, Theorem \ref{thm:phi} still applies.

\setcounter{equation}{0}

\section{Harmonic Lattice Systems} \label{sec:harm}
In this section, we present our second example of Lieb-Robinson bounds for systems
with unbounded Hamiltonians. Let $L$ and $\nu$ be positive integers. We will consider harmonic Hamiltonians defined on cubic subsets $\Lambda_L \, = \, \left( -L, L \right]^{ \nu}  \cap \mathbb{Z}^{\nu}$. Specifically, for $j = 1, \dots , \nu$ and real parameters $\lambda_j \geq 0$ and $\omega > 0$, we will analyze the Hamiltonian
\begin{equation} \label{eq:harham}
H_L^h \, =\, H_L^{h}( \{ \lambda_j \}, \omega) \, = \,  \sum_{ x \in \Lambda_L} p_{ x }^2 \, +\, \omega^2 \, q_{ x}^2 \, + \,
\sum_{j = 1}^{\nu}  \lambda_j \, (q_{ x } - q_{ x + e_j})^2,
\end{equation}
with periodic boundary conditions (in the sense that $q_{x+e_j} := q_{x-(2L-1)e_j}$ if $x \in \Lambda_L$ but $x+ e_j \not\in \Lambda_L$), acting in the Hilbert space
\begin{equation} \label{eq:hspace}
\mathcal{H}_{\Lambda_L} = \bigotimes_{x \in \Lambda_L} L^2(\mathbb{R}, dq_x).
\end{equation}
Here $\{ e_j \}_{j=1}^{\nu}$ are the canonical basis vectors in $\mathbb{Z}^{\nu}$, and
since, in most calculations, the values of $\lambda_j$ and $\omega$ will be fixed, we
will simply write $H_L^h$ for notational convenience. The quantities $p_x$ and $q_x$, which appear in (\ref{eq:harham}) above, are the single site momentum and position operators regarded as operators on the full Hilbert space $\mathcal{H}_{\Lambda_L}$
by setting (we use here units with $\hbar =1$)
\begin{equation} \label{eq:pandq}
p_x = \idty \otimes \cdots \otimes \idty
\otimes -i \frac{d}{dq} \otimes \idty \cdots \otimes \idty \quad
\mbox{ and } \quad q_x = \idty \otimes \cdots \otimes \idty \otimes q \otimes \idty
\cdots \otimes \idty,
\end{equation}
i.e., these operators act non-trivially only in the $x$-th factor of $\mathcal{H}_{\Lambda_L}$. These operators satisfy the canonical commutation relations
\begin{equation} \label{eq:comm}
[p_x, p_y] \, = \, [q_x, q_y] \, = \, 0 \quad \mbox{ and } \quad
[q_x, p_y] \, = \, i \delta_{x,y},
\end{equation}
valid for all $x, y \in \Lambda_L$. The Hamiltonian $H_L^h$ describes a system of coupled harmonic oscillators (with mass $m=1/2$) sitting at all $x \in \Lambda_L$.

\medskip

Let $\cA_{\Lambda_L}$ be the algebra of all bounded observables on $\cH_{\Lambda_L}$.
The time-evolution generated by the Hamiltonian (\ref{eq:harham}) is the one-parameter
group of automorphisms $\{ \tau_t^{h; \Lambda_L} \}_{t \in \mathbb{R}}$ of $\cA_{\Lambda_L}$,
defined by
\begin{equation} \label{eq:evo}
\tau_t^{h; \Lambda_L}(A) = e^{itH_L^h}Ae^{-itH_L^h} .
\end{equation}
As we will regard the length scale $L$ to be fixed, we will suppress the dependence of the dynamics
on $\Lambda_L$ in our notation, by setting $\tau_t^h (.) = \tau_t^{h;\Lambda_L}$.

\medskip

An important class of observables in $\cA_{\Lambda_L}$ are the Weyl operators. For a bounded, complex-valued function $f: \Lambda \to \bC$, we define the Weyl operator $W(f)$ by
\begin{equation}\label{eq:weyl}
W(f) = e^{i \sum_{x \in \Lambda} \left(q_x \text{Re } f_x + p_x \text{Im } f_x \right)}
\end{equation}
Clearly, $W(f)$ is a unitary operator in $\cA_{\Lambda_L}$ such that
\[ W^{-1} (f) = W^* (f) = W (-f) \, . \]
Moreover, using the well-known Baker-Campbell-Hausdorff formula
\begin{equation}\label{eq:BCH}
e^{A+B} = e^A e^B e^{-\frac{1}{2}[A,B]} \quad \text{if} \quad [A,[A,B]]=[B,[A,B]]=0 ,
\end{equation}
and the commutation relations (\ref{eq:comm}), it follows that Weyl operators satisfy the Weyl relations
\begin{equation} \label{eq:weylrel}
W(f) \, W(g) \, = \, W(g) \, W(f) \, e^{- i {\rm Im}[ \langle f, \, g \rangle]} \, = \, W(f+g) \, e^{-\frac{i}{2} {\rm Im}[\langle f, \, g \rangle]}
\end{equation}
for any bounded $f, g: \Lambda \to \bC$, and that they generate shifts of the position and the momentum operator, in the sense that
\begin{equation}\label{eq:shift}
W^* (f) \, q_x \, W (f) = q_x - \text{Im } f_x \qquad \text{and } W^* (f) \, p_x \, W (f) = p_x + \text{Re } f_x  \, .
\end{equation}

\bigskip

The main result of this section is a Lieb-Robinson bound for the harmonic time-evolution of Weyl operators.

\begin{theorem} \label{thm:harlrb} For any finite $X,Y \subset \mathbb{Z}^{\nu}$, for all $L >0$ such that $X, Y \subset \Lambda_{L}$, and for any functions $f$ and $g$ with ${\rm supp}(f) \subset X$ and  ${\rm supp}(g) \subset Y$, the estimate
\begin{equation}\label{eq:lrbharm}
\left\| \left[ \tau_t^h \left( W(f) \right), W(g) \right] \right\| \, \leq \,
C \, \| f \|_{\infty} \|g \|_{\infty} \, \sum_{x\in X, y\in Y} \,
e^{-\mu \left( d(x,y) - c_{\omega,\lambda} \max \left( \frac{2}{\mu} \, , \, e^{(\mu/2)+1}\right) |t| \right)}
\end{equation}
holds for all $\mu >0$. Here
\begin{equation}
\label{eq:dXY} d(x,y) = \sum_{j=1}^{\nu} \min_{\eta_j \in \, \bZ} |x_j-y_j + 2L \eta_j| \,.
\end{equation}
is the distance on the torus. Moreover
\begin{equation} \label{eq:defk}
C = \left(2+ c_{\omega,\lambda} e^{\mu/2} + c^{-1}_{\omega,\lambda} \right)
\end{equation} with $c_{\omega,\lambda} = (\omega^2 + 4 \sum_{j=1}^{\nu} \lambda_j)^{1/2}$.
\end{theorem}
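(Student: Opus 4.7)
The plan is to exploit two structural features of the harmonic dynamics: first, that $\tau_t^h$ maps Weyl operators to Weyl operators (the content of Lemma~\ref{lem:weylevo}), so that $\tau_t^h(W(f)) = W(f_t)$ for an explicitly computable evolved function $f_t$; and second, that the Weyl relations (\ref{eq:weylrel}) reduce commutators of Weyl operators to scalar phases. Concretely, from (\ref{eq:weylrel}) one has
\begin{equation*}
[W(f_t),W(g)] \, = \, W(f_t+g)\,\bigl(e^{-\tfrac{i}{2}\mathrm{Im}\langle f_t,g\rangle} - e^{\tfrac{i}{2}\mathrm{Im}\langle f_t,g\rangle}\bigr),
\end{equation*}
so since $W(f_t+g)$ is unitary and $|e^{i\theta}-e^{-i\theta}|\leq 2|\theta|$,
\begin{equation*}
\bigl\| [\tau_t^h(W(f)),W(g)]\bigr\| \, \leq \, 2\,|\mathrm{Im}\langle f_t,g\rangle|.
\end{equation*}
This reduces the theorem to estimating a bilinear functional of $f$ and $g$.

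Next I would insert the explicit formula for $f_t$ coming from diagonalization of $H_L^h$ in the Fourier basis on the torus. With dispersion relation $\gamma(k) = \bigl(\omega^2 + 4\sum_{j=1}^{\nu} \lambda_j \sin^2(k_j/2)\bigr)^{1/2}$, the real and imaginary parts of $f_t$ are related to $f$ by discrete convolution on $\Lambda_L$ against kernels whose Fourier coefficients involve $\cos(t\gamma(k))$ and $\gamma(k)^{\pm 1}\sin(t\gamma(k))$. Collecting terms and unfolding $\mathrm{Im}\langle f_t,g\rangle$ as a double sum over $x\in X$, $y\in Y$, I expect a pointwise bound of the shape
\begin{equation*}
|\mathrm{Im}\langle f_t,g\rangle| \, \leq \, \tfrac{1}{2}\bigl(2+c_{\omega,\lambda}e^{\mu/2}+c_{\omega,\lambda}^{-1}\bigr)\,\|f\|_\infty\|g\|_\infty\sum_{x\in X,\,y\in Y} |h_t(x-y)|,
\end{equation*}
where $h_t$ is the relevant lattice Fourier sum and the prefactor absorbs the normalisation constants of $\gamma(k)^{\pm 1}$ and the $\omega^2 q_x^2$ contribution; this is the combination that reproduces the constant $C$ in (\ref{eq:defk}).

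The technical heart is then the decay estimate for $h_t$ outside the light cone, which is precisely the statement of Lemma~\ref{lm:htx}: for each $\mu>0$,
\begin{equation*}
|h_t(x)| \, \leq \, e^{-\mu\bigl(d(x,0) - c_{\omega,\lambda}\max(2/\mu,\,e^{\mu/2+1})|t|\bigr)}.
\end{equation*}
The main obstacle, and the one I expect to require genuine work, lies in proving this lemma: it demands a contour-deformation / saddle-point analysis of a Fourier integral on the $\nu$-torus, in which one shifts the contour by an imaginary amount of size $\mu/2$ and tracks how the resulting modulus of $\cos(t\gamma(\cdot))$ and $\sin(t\gamma(\cdot))$ grows. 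The group-velocity bound $\sup_k|\nabla\gamma(k)| \leq c_{\omega,\lambda}$ produces the linear $|t|$ term; the $\max(2/\mu,e^{\mu/2+1})$ factor is the price paid for the contour shift, with the two regimes reflecting whether the bound is dominated by the sharpness of exponential decay or by the cost of complexifying $\gamma(k)$.

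Assuming Lemma~\ref{lm:htx}, the theorem follows immediately: substitute the kernel bound into the double sum, pull out the factor $e^{\mu c_{\omega,\lambda}\max(2/\mu,e^{\mu/2+1})|t|}$ (which is independent of $x,y$), and recognise the remaining $\sum_{x,y} e^{-\mu d(x,y)}$ weighted by $\|f\|_\infty\|g\|_\infty$ as the right-hand side of (\ref{eq:lrbharm}). The only step beyond bookkeeping is the Fourier estimate in Lemma~\ref{lm:htx}.
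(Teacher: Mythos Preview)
Your reduction of the theorem to Lemmas~\ref{lem:weylevo} and~\ref{lm:htx} is exactly the paper's approach: apply $\tau_t^h(W(f))=W(f_t)$, use the Weyl relations to bound the commutator norm by $|\mathrm{Im}\langle g,f_t\rangle|$, expand $f_t$ as a convolution against the kernels $h^{(L)}_{1,t},h^{(L)}_{2,t}$, and insert the pointwise decay of these kernels from Lemma~\ref{lm:htx}. (Your commutator bound of $2|\mathrm{Im}\langle f_t,g\rangle|$ is a harmless overestimate by a factor of two; from your own identity $|e^{-i\theta/2}-e^{i\theta/2}|=2|\sin(\theta/2)|\leq|\theta|$ one gets $|\mathrm{Im}\langle f_t,g\rangle|$ directly, and the constant $C$ then arises simply as the sum of the two kernel prefactors from Lemma~\ref{lm:htx}.)

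Where your proposal diverges from the paper is in the anticipated mechanism behind Lemma~\ref{lm:htx}. The paper does \emph{not} perform a contour shift or saddle-point analysis. Instead it expands $e^{-2i\gamma(k)t}$ as a Taylor series in $t$; since $\gamma(k)^{2n}$ is a polynomial of degree at most $n$ in each $\cos k_j$, the orthogonality relations on the discrete torus force the Fourier sum $\frac{1}{|\Lambda_L|}\sum_k e^{ik\cdot x}\gamma(k)^{2n}$ to vanish whenever $n<|x|$. What remains is the tail $\sum_{n\geq|x|}(2c_{\omega,\lambda}|t|)^{2n}/(2n)!$, and Stirling's formula converts this into the stated exponential bound. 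This combinatorial argument is more elementary than a complex-analytic one and gives the constants explicitly; a contour shift of the kind you describe would also work in principle but requires controlling $\gamma$ along complex $k$, which the paper avoids entirely.
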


\begin{corollary}\label{cor:harlrb}
Under the same conditions as in Theorem \ref{thm:harlrb}, for any $0<a<1$, one has
\begin{equation}\label{eqcor:lrbharm}
\left\| \left[ \tau_t^h \left( W(f) \right), W(g) \right] \right\| \, \leq \,
\tilde{C} \, \| f \|_{\infty} \|g \|_{\infty} \, \min(|X|, |Y|) \,
e^{-\mu \left( a d(X,Y) - c_{\omega,\lambda} \max \left( \frac{2}{\mu} \, , \, e^{(\mu/2)+1}\right) |t| \right)}
\end{equation}
where
$$
d(X,Y) = \min_{x \in X, y \in Y} d(x,y)
$$
and
$$
\tilde{C} = C \sum_{z\in \mathbb{Z}^\nu} e^{-\mu (1-a)|z|}\, .
$$
\end{corollary}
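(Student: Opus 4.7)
The plan is to derive Corollary~\ref{cor:harlrb} directly from Theorem~\ref{thm:harlrb} by two elementary manipulations of the double sum
\[
\sum_{x \in X,\, y \in Y} e^{-\mu\, d(x,y)}
\]
on the right-hand side of \eqref{eq:lrbharm}. The velocity-dependent part of the exponent, namely $c_{\omega,\lambda}\max(2/\mu,\, e^{\mu/2+1})|t|$, is constant in $(x,y)$, so it pulls out of the sum unchanged and contributes only to the exponent appearing in \eqref{eqcor:lrbharm}; the entire content of the corollary is therefore a statement about the spatial sum.

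First, for $0 < a < 1$, I would split
\[
e^{-\mu\, d(x,y)} \;=\; e^{-\mu a\, d(x,y)}\cdot e^{-\mu(1-a)\, d(x,y)},
\]
and use $d(x,y)\geq d(X,Y)$ in the first factor to pull $e^{-\mu a\, d(X,Y)}$ outside the sum. This is exactly the exponential decay factor needed in \eqref{eqcor:lrbharm}.

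Second, I would bound the residual sum $\sum_{x \in X, y \in Y} e^{-\mu(1-a)\, d(x,y)}$ by $\min(|X|,|Y|)\cdot\sum_{z \in \mathbb{Z}^\nu} e^{-\mu(1-a)|z|}$. Fix $x \in X$, and enlarge $Y$ to $\Lambda_L$. Since $\Lambda_L$ is a fundamental domain of $\mathbb{Z}^\nu/(2L\mathbb{Z})^\nu$, every $w \in \mathbb{Z}^\nu$ admits a unique representation $w = (y-x) + 2L\eta$ with $y \in \Lambda_L$ and $\eta \in \mathbb{Z}^\nu$; by definition \eqref{eq:dXY}, $d(x,y) = \min_{\eta}|y-x+2L\eta|$ is the minimum of $|w|$ within the corresponding equivalence class. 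Consequently,
\[
\sum_{w \in \mathbb{Z}^\nu} e^{-\mu(1-a)|w|} \;=\; \sum_{y \in \Lambda_L}\sum_{\eta \in \mathbb{Z}^\nu} e^{-\mu(1-a)|y-x+2L\eta|} \;\geq\; \sum_{y \in \Lambda_L} e^{-\mu(1-a)\, d(x,y)}.
\]
This bound is uniform in $x$, so summing over $x \in X$ yields the factor $|X|$; by the symmetric argument (fix $y \in Y$ first and sum over $x$), one obtains $|Y|$, hence the minimum.

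Combining the two steps, the constant $C$ of Theorem~\ref{thm:harlrb} is multiplied by $\sum_{z \in \mathbb{Z}^\nu} e^{-\mu(1-a)|z|}$, producing $\tilde{C}$ as defined in the statement. There is no substantive obstacle in the argument; the only point requiring a brief verification is that, because $d$ is the torus metric rather than the Euclidean metric, one must interpret the fundamental domain decomposition carefully to ensure that the sum over $\mathbb{Z}^\nu$ genuinely dominates the finite sum over $\Lambda_L$ (and thus over $Y$), which it does as shown above.
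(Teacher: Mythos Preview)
Your argument is correct and is precisely the standard derivation the paper has in mind; the paper states the corollary without proof, treating it as an immediate consequence of Theorem~\ref{thm:harlrb} via exactly the splitting $e^{-\mu d(x,y)} = e^{-\mu a\, d(x,y)}e^{-\mu(1-a)\, d(x,y)}$ and the sum-over-the-lattice bound you describe. Your care with the torus metric and the fundamental domain is appropriate and fills in the one detail that is genuinely left implicit.
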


\begin{remark}
i) As we will discuss in Remark~\ref{rem:nb1} (see also Lemma~\ref{lm:Htx}), both
Theorem~\ref{thm:harlrb} and Corollary~\ref{cor:harlrb} remain valid in the case $\omega =0$.

\medskip

ii) If we make the further assumption that the sets $X$ and $Y$ have a minimal separation distance, then a stronger, ``small time'' version of (\ref{eq:lrbharm}) holds. Specifically, let $\mu >0$ be given, and assume that $X$ and $Y$ have been chosen with $d(X,Y) > 1+c_{\omega, \lambda}e^{(\mu/2)+1}$. Then for any functions $f$ and $g$ with support in $X$ and $Y$, respectively, one has that
\begin{equation}\label{eq:lrbharm2}
\left\| \left[ \tau_t^h \left( W(f) \right), W(g) \right] \right\| \, \leq \,  t^{2d(X,Y)} \, C \, \| f \|_{\infty} \|g \|_{\infty} \, \sum_{x\in X, y\in Y} \,
e^{-\mu \left( d(x,y) - c_{\omega,\lambda} \max \left( \frac{2}{\mu} \, , \, e^{(\mu/2)+1}\right) |t| \right)}.
\end{equation}
This bound follows from factoring the $t^{2|x|}$ out of (\ref{eq:smalltime}), and then completing the argument as before.

\medskip

iii) In most applications of the Lieb-Robinson bound it is important to obtain an estimate on the group velocity,
referred to as the Lieb-Robinson velocity \cite{NS1,HK,NOS1,bravyi2006,eisert2006,hastings2007,NSP}. Note that we can obtain arbitrarily fast exponential decay in space
at the cost of a worse estimate for the Lieb-Robinson velocity:
\begin{equation}\label{eq:vharm}
v_h(\mu)=c_{\omega,\lambda}\max( \frac{2}{\mu}, e^{(\mu/2)+1})\, .
\end{equation}
The optimal Lieb-Robinson velocity in the above estimates is obtained by choosing $\mu=\mu_0$, the solution of
$$
\frac{2}{\mu}=e^{(\mu/2) +1}\, .
$$
Clearly, $1/2<\mu_0<1$. This gives the following bound for the Lieb-Robinson velocity
in the harmonic lattice: $v_h(\mu_0) = 2c_{\omega,\lambda}/\mu_0\leq 4c_{\omega,\lambda}$.
\end{remark}

Theorem \ref{thm:harlrb} follows from Lemma~\ref{lem:weylevo} and Lemma~\ref{lm:htx}, both proven below. In
Lemma \ref{lem:weylevo}, we derive an explicit formula for the time evolution of a Weyl operator.
This allows us to bound the norm on the l.h.s. of (\ref{eq:lrbharm}) by certain Fourier sums which we then estimate in Lemma \ref{lm:htx}.

For bounded functions $f, g: \Lambda_L \to \bC$, we define the convolution $(f*g): \Lambda_L \to \bC$ by
\begin{equation} \label{eq:defconv}
(f * g)_x \, = \, \sum_{y \in \Lambda_L} f_y \, g_{x-y},
\end{equation}
for any $x \in \Lambda_L$ (if $(x-y) \not\in \Lambda_L$, then we define $g_{x-y}$ through the periodic boundary conditions).

\begin{lemma} \label{lem:weylevo} Let $L$ be a positive integer and consider a bounded function $f: \Lambda_L \to \bC$. Then the harmonic evolution of the Weyl operator $W(f)$ is the Weyl operator given by
\begin{equation} \label{eq:evoweyl}
\tau_t^{h} \left( \, W(f) \, \right) \, = \, W \left( f_t \right),
\quad f_t = f* \overline{h^{(L)}_{1,t}} \, + \, \overline{f} * h^{(L)}_{2,t}\, .
\end{equation}
Here the even functions $h_{1,t}$ and $h_{2,t}$ are given by
\begin{equation} \label{eq:defh1t}
h^{(L)}_{1,t}(x) \, = \, \frac{i}{2} {\rm Im} \left[ \frac{1}{| \Lambda_L|} \sum_{k \in \Lambda_L^*} \left( \gamma(k) \, + \, \frac{1}{\gamma(k)} \right) \, e^{ik \cdot x - 2i \gamma(k) t} \, \right] \,
+ \,  {\rm Re} \left[  \frac{1}{| \Lambda_L|} \sum_{k \in \Lambda_L^*} e^{ik \cdot x - 2i \gamma(k) t} \,\right],
\end{equation}
and
\begin{equation} \label{eq:defh2t}
h^{(L)}_{2,t}(x) \, = \, \frac{i}{2} {\rm Im} \left[ \frac{1}{| \Lambda_L|} \sum_{k \in \Lambda_L^*} \left( \gamma(k) \, - \, \frac{1}{\gamma(k)} \right) \, e^{ik \cdot x - 2i \gamma(k) t} \, \right] ,
\end{equation}
where
\[ \Lambda_L^* = \left\{   \frac{x \pi}{L} \, : \, x \in \Lambda_L \, \right\}\]
and
\begin{equation} \label{eq:defgamma}
\gamma(k; \omega, \{ \lambda_j \}) \, = \, \gamma(k) \, = \, \sqrt{ \omega^2 \, + \, 4 \sum_{j=1}^{\nu} \lambda_j \, \sin^2(k_j/2)}.
\end{equation}
\end{lemma}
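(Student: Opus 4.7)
My plan is to exploit the fact that conjugation by the unitary generated by a quadratic Hamiltonian sends a linear combination of $q_x,p_x$ to another linear combination; consequently Weyl operators are mapped to Weyl operators, and the problem reduces to identifying the new symbol $f_t$. Writing the generator of the Weyl operator as $\Phi(f):=\sum_{x\in\Lambda_L}(q_x\operatorname{Re} f_x+p_x\operatorname{Im} f_x)$, so that $W(f)=e^{i\Phi(f)}$, the unitarity of $U(t):=e^{itH_L^h}$ together with the spectral calculus for self-adjoint operators gives $U(t)\,e^{i\Phi(f)}\,U(t)^*=\exp\!\bigl(i\,U(t)\Phi(f)U(t)^*\bigr)$. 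Hence it suffices to show that $\tau_t^h(\Phi(f))=\Phi(f_t)$ with $f_t$ as in \eqref{eq:evoweyl}.

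The core computation is the Heisenberg evolution of the canonical operators. A direct commutator calculation using \eqref{eq:comm} yields the coupled linear system $\dot q_x=2p_x$ and $\dot p_x=-2\omega^2 q_x-2\sum_{j=1}^\nu\lambda_j(2q_x-q_{x+e_j}-q_{x-e_j})$. I would diagonalize this by the discrete Fourier transform on the torus: setting $\hat q_k:=|\Lambda_L|^{-1/2}\sum_x e^{-ik\cdot x}q_x$ and analogously for $\hat p_k$ decouples the modes, and each pair obeys $\ddot{\hat q}_k=-4\gamma(k)^2\hat q_k$ with $\gamma(k)$ as in \eqref{eq:defgamma}. The elementary solution is
\begin{equation*}
\hat q_k(t)=\cos(2\gamma(k)t)\,\hat q_k+\frac{\sin(2\gamma(k)t)}{\gamma(k)}\,\hat p_k,\qquad \hat p_k(t)=-\gamma(k)\sin(2\gamma(k)t)\,\hat q_k+\cos(2\gamma(k)t)\,\hat p_k.
\end{equation*}
Inverse Fourier transforming then writes $\tau_t^h(q_x)=\sum_y[K_1(x-y,t)q_y+K_2(x-y,t)p_y]$ and $\tau_t^h(p_x)=\sum_y[K_3(x-y,t)q_y+K_1(x-y,t)p_y]$ in terms of three real, spatially even kernels defined as the Fourier sums of the symbols $\cos(2\gamma(k)t)$, $\sin(2\gamma(k)t)/\gamma(k)$, and $-\gamma(k)\sin(2\gamma(k)t)$ respectively.

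To conclude, I would substitute these expressions together with $\operatorname{Re} f=(f+\bar f)/2$ and $\operatorname{Im} f=(f-\bar f)/(2i)$ into $\tau_t^h(\Phi(f))=\sum_x[\operatorname{Re} f_x\,\tau_t^h(q_x)+\operatorname{Im} f_x\,\tau_t^h(p_x)]$, collect the coefficients of $q_y$ and $p_y$, and read off $(f_t)_y=\operatorname{Re}(f_t)_y+i\operatorname{Im}(f_t)_y$. A short algebraic rearrangement, using the evenness of the kernels so that convolution is commutative, gives
\begin{equation*}
f_t=\bigl[K_1+i(K_2-K_3)/2\bigr]*f+\bigl[i(K_2+K_3)/2\bigr]*\bar f,
\end{equation*}
so that $\overline{h^{(L)}_{1,t}}=K_1+i(K_2-K_3)/2$ and $h^{(L)}_{2,t}=i(K_2+K_3)/2$. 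A final Fourier-symmetric rewriting using $\gamma(-k)=\gamma(k)$, together with the identity
\begin{equation*}
\operatorname{Re}\!\Bigl[\tfrac{1}{|\Lambda_L|}\sum_k c_k\, e^{ik\cdot x-2i\gamma(k)t}\Bigr]=\tfrac{1}{|\Lambda_L|}\sum_k c_k\, e^{ik\cdot x}\cos(2\gamma(k)t),
\end{equation*}
and its analogue with $\operatorname{Im}[\cdot]$ producing $-\sin(2\gamma(k)t)$ (valid for any real, even-in-$k$ symbol $c_k$), converts these into the forms stated in \eqref{eq:defh1t} and \eqref{eq:defh2t}.

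The main obstacle is not conceptual but the bookkeeping: one must track three distinct real kernels while repeatedly splitting $f$ and its convolutions into real and imaginary parts, and in the end match the result to a compact formula involving $\gamma(k)\pm 1/\gamma(k)$. A secondary point concerns unbounded operators in the exponential identity, but $\Phi(f)$ is essentially self-adjoint on $\bigotimes_{x\in\Lambda_L}\mathcal{S}(\bR)$, $U(t)$ preserves this core, and the spectral theorem makes the exponentiation step $U(t)e^{i\Phi(f)}U(t)^*=\exp(i\,\tau_t^h(\Phi(f)))$ rigorous.
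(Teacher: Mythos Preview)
Your proposal is correct and follows essentially the same route as the paper: Fourier-diagonalize the quadratic Hamiltonian, solve the resulting decoupled linear dynamics for the canonical operators, invert the transform to write $\tau_t^h(q_x),\tau_t^h(p_x)$ as convolutions, and read off $f_t$. The only cosmetic difference is that the paper passes through creation/annihilation operators $b_k,b_k^*$ (so that the mode dynamics becomes a pure phase $e^{\mp 2i\gamma(k)t}$) rather than solving the second-order ODE for $\hat q_k$ directly, but this leads to the identical kernels $K_1,K_2,K_3$ and the same final identification.
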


The proof of Lemma \ref{lem:weylevo} is given in Section \ref{sec:weylevo}.

\begin{lemma}\label{lm:htx}
Suppose that the functions $h_{1,t}^{(L)}, h_{2,t}^{(L)}: \Lambda_L \to \bC$ are defined as in (\ref{eq:defh1t}), (\ref{eq:defh2t}).
Then
\begin{equation*}
|h^{(L)}_{m,t} (x)| \leq \left( 1 + \frac{1}{2} c_{\omega,\lambda}e^{\mu/2} + \frac{1}{2} c_{\omega,\lambda}^{-1} \right) e^{-\mu \left( |x| - c_{\omega,\lambda} \max \left( \frac{2}{\mu} \, , \, e^{(\mu/2)+1}\right) |t| \right)}
\end{equation*}
for $m=1,2$, all $\mu >0$, $t \in \mathbb{R}$, and $x \in \Lambda_L$. Here we defined $c_{\omega,\lambda} = (\omega^2 + 4 \sum_{j=1}^{\nu} \lambda_j )^{1/2}$ and $|x| = \sum_{j=1}^{\nu} |x_i|$. Note that the bounds are uniform in $L$.
\end{lemma}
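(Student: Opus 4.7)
The plan is to exploit the fact that $\gamma(k)^2 = (\omega^2 + 2\sum_j\lambda_j) - 2\sum_j\lambda_j\cos(k_j)$ is a nearest-neighbour trigonometric polynomial on the dual torus, so that $\gamma(k)^{2m}$ has Fourier support in the $\ell^1$-ball of radius $m$ in $\mathbb{Z}^\nu$. Combined with a Taylor expansion in the variable $\gamma^2$, this will yield the required exponential decay in $|x|$. First I would use the evenness of $\gamma(k)$ under $k\mapsto -k$ together with the symmetry of $\Lambda_L^*$ to reduce the definitions of $h^{(L)}_{1,t}(x)$ and $h^{(L)}_{2,t}(x)$ to linear combinations of real-valued sums of the form $|\Lambda_L|^{-1}\sum_{k}\cos(k\cdot x)\,g(\gamma(k))$, where $g$ runs over the three functions $\cos(2\gamma t)$, $\gamma\sin(2\gamma t)$, and $\gamma^{-1}\sin(2\gamma t)$; the other contributions cancel by symmetry.

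Next, I would Taylor-expand each of these three functions in the variable $\gamma^2$:
\begin{align*}
\cos(2\gamma t) &= \sum_{m\geq 0}\tfrac{(-1)^m(2t)^{2m}}{(2m)!}(\gamma^2)^m, &
\gamma\sin(2\gamma t) &= \sum_{m\geq 0}\tfrac{(-1)^m(2t)^{2m+1}}{(2m+1)!}(\gamma^2)^{m+1},
\end{align*}
and analogously $\gamma^{-1}\sin(2\gamma t)=\sum_{m}(-1)^m(2t)^{2m+1}/(2m+1)!\,(\gamma^2)^m$. This reduces everything to the Fourier-inverse sums $A_m(x):=|\Lambda_L|^{-1}\sum_k\cos(k\cdot x)(\gamma(k)^2)^m$. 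Since $\gamma^2$ is itself Fourier-supported on $\{0,\pm e_1,\ldots,\pm e_\nu\}$, the multinomial expansion of $(\gamma^2)^m$ shows that $A_m(x)$ vanishes whenever $|x|>m$; and trivially $|A_m(x)|\leq c_{\omega,\lambda}^{2m}$ since $\gamma\leq c_{\omega,\lambda}$.

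After substitution, $|h^{(L)}_{m,t}(x)|$ is controlled by a linear combination of tails of the Taylor series of $\cosh$ and $\sinh$ at $y=2c_{\omega,\lambda}|t|$, each tail beginning at index $\sim 2|x|$. The basic tail estimate is
\[
\sum_{n\geq N}\frac{y^n}{n!}\ \leq\ e^{-\mu N/2}\,e^{y\,e^{\mu/2}}\qquad\text{for any }\mu>0,
\]
obtained by inserting the identity $1=e^{\mu n/2}e^{-\mu n/2}$ and bounding $e^{-\mu n/2}\leq e^{-\mu N/2}$ for $n\geq N$. Applied with $N$ taking the values $2|x|$, $2|x|\pm 1$ arising in the three Taylor expansions, this produces the exponential factor $e^{-\mu|x|+2c_{\omega,\lambda}e^{\mu/2}|t|}$, and tracking the prefactors $1,\gamma,\gamma^{-1}$ (together with the $1/2$ from $\tfrac{i}{2}\mathrm{Im}[\cdot]$) gives the explicit constant $1+\tfrac{1}{2}c_{\omega,\lambda}e^{\mu/2}+\tfrac{1}{2c_{\omega,\lambda}}$ of the lemma.

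The main obstacle will be the final step, in which one must pass from the natural velocity $2c_{\omega,\lambda}e^{\mu/2}/\mu$ that comes directly out of the tail estimate to the $\max$-form $v(\mu)=c_{\omega,\lambda}\max(2/\mu,\,e^{\mu/2+1})$ stated in the lemma. This requires a small case analysis on the size of $|x|$ relative to $|t|$: in the ``super-luminal'' regime $|x|\geq c_{\omega,\lambda}e^{\mu/2+1}|t|$, Stirling's inequality $(2m)!\geq (2m/e)^{2m}$ gives clean geometric decay $(ec_{\omega,\lambda}|t|/m)^{2m}\leq e^{-\mu m}$ and hence the rate $ce^{\mu/2+1}$; in the complementary regime the trivial bound $\cosh(y)\leq e^y$ and $\sinh(y)\leq \tfrac12 e^y$ absorb the $|t|$-dependence into the rate $2c/\mu$, and taking the worse of the two recovers $v(\mu)$.
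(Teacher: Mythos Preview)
Your approach is essentially the paper's: reduce by the evenness of $\gamma$ to the three sums $H^{(0)}_L, H^{(1)}_L, H^{(-1)}_L$ (your $g=\cos(2\gamma t)$, $\gamma\sin(2\gamma t)$, $\gamma^{-1}\sin(2\gamma t)$), Taylor-expand in $\gamma^2$, use that $\gamma^2$ has nearest-neighbour Fourier support so that $A_m(x)=0$ for $|x|>m$ with $|A_m(x)|\le c_{\omega,\lambda}^{2m}$, and finish with a tail estimate plus a case split. Two points deserve correction.

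First, the case split in your last paragraph has the two rates interchanged, and the ``trivial bound'' you invoke is not the right one. In the super-luminal regime $|x|\ge c_{\omega,\lambda}e^{(\mu/2)+1}|t|$, Stirling gives $(2c_{\omega,\lambda}|t|)^{2|x|}/(2|x|)!\le e^{-\mu|x|}$ and hence $|H^{(0)}|\le e^{-\mu|x|}e^{2c_{\omega,\lambda}|t|}$, which is the rate $2c_{\omega,\lambda}/\mu$. In the complementary regime one does \emph{not} use $\cosh y\le e^y$ (that only yields $|H^{(0)}|\le e^{2c_{\omega,\lambda}|t|}$, which is useless for $|x|>0$); rather one uses the genuine a~priori bound $|H^{(0)}(t,x)|\le 1$ coming directly from the definition as an average of unimodular exponentials. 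Since $|x|\le c_{\omega,\lambda}e^{(\mu/2)+1}|t|$ in this regime, $1\le e^{-\mu(|x|-c_{\omega,\lambda}e^{(\mu/2)+1}|t|)}$, which is the rate $c_{\omega,\lambda}e^{(\mu/2)+1}$. Taking the worse of the two gives the $\max$.

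Second, your direct Taylor expansion of $\gamma^{-1}\sin(2\gamma t)$ does not recover the stated constant $c_{\omega,\lambda}^{-1}$ for the $H^{(-1)}$ piece. In the sub-luminal regime the only available a~priori bound is $|H^{(-1)}|\le\sup_k\gamma(k)^{-1}=\omega^{-1}$, which is larger than $c_{\omega,\lambda}^{-1}$; and the series bound $|H^{(-1)}|\le c_{\omega,\lambda}^{-1}\sinh(2c_{\omega,\lambda}|t|)$ does not fit under $c_{\omega,\lambda}^{-1}e^{-\mu(|x|-v|t|)}$ either. The paper handles this by the identity
\[
H^{(-1)}_L(t,x)=-2\int_0^t H^{(0)}_L(s,x)\,\rd s,
\]
from which $|H^{(-1)}_L(t,x)|\le 2\int_0^{|t|}e^{-\mu(|x|-vs)}\,\rd s\le \tfrac{2}{\mu v}\,e^{-\mu(|x|-v|t|)}\le c_{\omega,\lambda}^{-1}\,e^{-\mu(|x|-v|t|)}$, using $v\ge 2c_{\omega,\lambda}/\mu$. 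Without this step you obtain the lemma only with $\omega^{-1}$ in place of $c_{\omega,\lambda}^{-1}$ in the prefactor.
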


The proof of Lemma \ref{lm:htx} can be found in Section \ref{sec:htx}.
Using these two lemmas, we can complete the proof of Theorem \ref{thm:harlrb}.

\begin{proof}[Proof of Theorem~\ref{thm:harlrb}]
Let $f$ and $g$ be functions supported in disjoint sets $X$ and $Y$, respectively, with separation distance $d(X,Y)>0$. Let $L>0$ be large enough so that
$X \cup Y \subset \Lambda_L$. With Lemma~\ref{lem:weylevo} and the Weyl relations (\ref{eq:weylrel}), it is clear that
\begin{equation*}
\left[ \, \tau_t^{h} \left( \, W(f) \, \right), \, W(g) \, \right] \, = \,
W \left(  f_t \right) \, W(g) \, \left( \, 1 \, - \, e^{-i {\rm Im} \left[ \langle g,  f_t  \rangle \right]} \, \right).
\end{equation*}
Using the above formula, it follows that
\begin{equation}\label{eq:thmharm1}
\left\|  \, \left[ \, \tau_t^{h} \left( \, W(f) \, \right), \, W(g) \, \right]  \, \right\| \, \leq \, \left| \, {\rm Im} \left[ \langle g, f_t\rangle\right]\right|
\, \leq \, \left| \,  \langle g, f* \overline{h^{(L)}_{1,t}} \, + \, \overline{f}* h^{(L)}_{2,t} \rangle \, \right| .
\end{equation}
Expanding the first term, we find that
\begin{equation}
\langle g, f * \overline{h^{(L)}_{1,t}} \rangle \, = \, \sum_{y \in \Lambda_L} \overline{g_y} \, \left( f*h^{(L)}_{1,t} \right)_y \, = \, \sum_{y \in Y} \sum_{x \in X} \overline{g_y} \, f_x \, \overline{h^{(L)}_{1,t}(y-x)},
\end{equation}
and therefore the bound
\begin{equation}
\begin{split}
\big| \, \langle \, g, \, f & * \overline{h^{(L)}_{1,t}} \, \rangle \, \big|
\\
\leq \, & \| f \|_{\infty} \, \| g \|_{\infty} \, \sum_{x \in X, y \in Y} \left| h^{(L)}_{1,t} (x-y) \right| \\ \leq \, & \left( 1 + \frac{1}{2} c_{\omega,\lambda}e^{\mu/2} + \frac{1}{2} c_{\omega,\lambda}^{-1} \right) \, \| f \|_{\infty} \, \| g \|_{\infty} \sum_{x \in X, y \in Y} e^{-\mu \left( d(x,y) - c_{\omega,\lambda} \max\left( \frac{2}{\mu} \, , \, e^{(\mu/2)+1}\right) |t| \right)} \label{eq:gharmbd}
\end{split}
\end{equation}
follows from Lemma~\ref{lm:htx}. A similar analysis applies to the second term on the r.h.s.
of (\ref{eq:thmharm1}), yielding (\ref{eq:lrbharm}).
\end{proof}

\subsection{Harmonic Evolution of Weyl Operators}
\label{sec:weylevo}

The goal of this section is to prove Lemma \ref{lem:weylevo}. To this end, we diagonalize the harmonic Hamiltonian $H_L^h$ by introducing Fourier space operators. Consider the set (recall that $\Lambda_L = (-L,L]^{\nu} \cap \bZ^{\nu}$)
\begin{equation*}
\Lambda_L^* \, = \, \left\{ \, \frac{x \pi}{L} \, : \, x \in \Lambda_L \, \right\} .
\end{equation*}
Then it is clear that $\Lambda_L^* \subset ( - \pi, \pi]^{\nu}$ and $| \Lambda_L^* | = (2L)^{\nu} = | \Lambda_L|$. For each $k \in \Lambda_L^*$, we introduce the operators,
\begin{equation} \label{eq:Q+Pk}
Q_k \, = \, \frac{1}{ \sqrt{ | \Lambda_L |}} \sum_{x \in \Lambda_L} e^{- i k \cdot x} q_x \quad \mbox{and} \quad
P_k \, = \, \frac{1}{ \sqrt{ | \Lambda_L |}} \sum_{x \in \Lambda_L} e^{- i k \cdot x} p_x \, .
\end{equation}

One may easily calculate that $Q_k^* = Q_{-k}$ (similarly, $P_k^* = P_{-k}$) for all $k \in \Lambda_L^*$.
Here we have adopted the convention that for $k = (k_1, \dots, k_{\nu}) \in \Lambda_L^*$,
$-k$ is defined to be the element of $\Lambda_L^*$ whose components are given by
\begin{equation*}
(-k)_j \, = \, \left\{ \begin{array}{cc} - k_j, & \mbox{if } |k_j| < \pi, \\ \pi, & \mbox{otherwise.} \end{array} \right.
\end{equation*}
This is reasonable as $e^{i \pi x} = e^{-i \pi x}$ for all integers $x$.
These operators satisfy the following commutation relations
\begin{equation}\label{eq:qpkcom}
[Q_k, Q_{k'}] \, = \, [P_k, P_{k'}] \, = \, 0 \quad \mbox{and} \quad  [Q_k, P_{k'}] \, = \, i \, \delta_{k, -k'},
\end{equation}
for any $k, k' \in \Lambda_L^*$. Furthermore, for any $x \in \Lambda_L$,
\begin{equation} \label{eq:q+px}
q_x \, = \, \frac{1}{ \sqrt{ | \Lambda_L |}} \sum_{k \in \Lambda_L^*} e^{i k \cdot x} Q_k \quad \mbox{and} \quad
p_x \, = \, \frac{1}{ \sqrt{ | \Lambda_L|}} \sum_{k \in \Lambda_L^*} e^{i k \cdot x} P_k \,.
\end{equation}

With the above relations, it is easy to check that the harmonic Hamiltonian (\ref{eq:harham}) can be rewritten as
\begin{equation} \label{eq:hamQP}
H_L^h \, = \, \sum_{k \in \Lambda_L^*} P_k P_{-k} \, + \, \gamma^2 (k) Q_k Q_{-k} \, .
\end{equation}
where we introduced the notation
\begin{equation} \label{eq:defgamma2}
\gamma(k) \, = \, \gamma(k; \{ \lambda_j \}, \omega) \, = \,  \sqrt{ \omega^2 \, + \, 4 \sum_{j=1}^{\nu} \lambda_j \, \sin^2(k_j/2) } \, .
\end{equation}
Observe that $\gamma(k)$ is independent of sign changes in any component of $k$.

Since we have assumed that  $\omega >0$, we have that $\gamma(k) \geq \omega >0$, and therefore, we may diagonalize the Hamiltonian by setting
\begin{equation} \label{eq:beqns}
b_k \, = \, \frac{1}{ \sqrt{2 \gamma(k)}} \, P_k - i \sqrt{ \frac{\gamma(k)}{2}} \, Q_k \quad {\rm and} \quad
b_k^* \, = \, \frac{1}{ \sqrt{2 \gamma(k)}} \, P_{-k} + i \sqrt{ \frac{\gamma(k)}{2}} \, Q_{-k} \, .
\end{equation}
In fact, as a result of this definition, we find that for $k,k' \in \Lambda_L^*$
\begin{equation}\label{eq:bkcom}
[b_k, b_{k'}] \, = \, [b_k^*, b_{k'}^*] \, = \, 0 \quad \mbox{and} \quad  [b_k, b_{k'}^*] \, =  \, \delta_{k, k'},
\end{equation}
and moreover, for each $k \in \Lambda_L^*$,
\begin{equation} \label{eq:Q+Pk1}
Q_k \, = \, \frac{i}{ \sqrt{ 2 \gamma(k)}} \left( b_k \, - \, b_{-k}^* \right) \quad \mbox{and} \quad
P_k \, = \, \sqrt{ \frac{\gamma(k)}{2}} \left( b_k \, + \, b_{-k}^* \right).
\end{equation}
Inserting the above into (\ref{eq:hamQP}), we have that
\begin{equation} \label{eq:diagham}
H_L^h \, = \, \sum_{k \in \Lambda_L^*} \, \gamma(k) \, \left( \, 2 \, b_k^*\,  b_k \, + \, 1 \, \right).
\end{equation}
{F}rom this representation of the Hamiltonian $H_L^h$, we obtain immediately the Heisenberg evolution of the operators $b_k$ and $b_k^*$. In fact, from the commutation relations (\ref{eq:bkcom}), it follows that
\begin{equation} \label{eq:dynbk}
\tau_t^{h}(b_k)  \, = \, e^{-2i \gamma(k) t} \, b_k \quad {\rm and} \quad \tau_t^{h}(b_k^*)  \, = \, e^{2i \gamma(k) t} \, b_k^*
\end{equation}
for all $t \in \mathbb{R}$.

To compute the evolution of the operators $p_x$ and $q_x$, for $x \in \Lambda_L$, we express them in terms of $b_k$ and $b_k^*$. We find
\begin{equation} \label{eq:axQkPk}
\begin{split}
q_x \, = & \, \frac{1}{\sqrt{| \Lambda_L|}} \sum_{k \in \Lambda_L^*} e^{ik \cdot x} Q_k =\frac{i}{\sqrt{2 |\Lambda_L|}} \sum_{k \in \Lambda_L^*}  \frac{e^{ik\cdot x}}{\sqrt{\gamma (k)}} \left( b_k - b_{-k}^* \right) \\
p_x \, = & \, \frac{1}{\sqrt{| \Lambda_L|}} \sum_{k \in \Lambda_L^*} e^{ik \cdot x} P_k =\frac{1}{\sqrt{2 |\Lambda_L|}} \sum_{k \in \Lambda_L^*}  \sqrt{\gamma (k)} \; e^{ik \cdot x} \, \left( b_k + b_{-k}^* \right)\, .
\end{split}
\end{equation}
Therefore
\begin{equation*}
\begin{split}
\tau_t^h (q_x) \, = & \, \frac{i}{\sqrt{2 |\Lambda_L|}} \sum_{k \in \Lambda_L^*}  \frac{e^{ik\cdot x}}{\sqrt{\gamma (k)}} \left( e^{-2i\gamma (k) t} \, b_k - e^{2i \gamma (k) t} b_{-k}^* \right) \\
= & \, \frac{i}{\sqrt{2 |\Lambda_L|}} \sum_{k \in \Lambda_L^*}  \frac{1}{\sqrt{\gamma (k)}} \left( e^{ik\cdot x -2 i \gamma (k) t} \, b_k - e^{-ik \cdot x + 2 i \gamma (k) t}  \, b_k^* \right)
\end{split}
\end{equation*}
and
\begin{equation*}
\tau_t^h (p_x) = \frac{1}{\sqrt{2 |\Lambda_L|}} \sum_{k \in \Lambda_L^*}  \sqrt{\gamma (k)}  \; \left( e^{ik\cdot x -2i \gamma (k) t} \, b_k + e^{-ik \cdot x + 2 i \gamma (k) t} \, b_{k}^* \right)\,.
\end{equation*}
{F}rom (\ref{eq:beqns}) and (\ref{eq:q+px}), it follows that
\begin{equation*}
\begin{split}
\tau_t^h (q_x) \, = & \, \frac{i}{2 |\Lambda_L|} \sum_{k \in \Lambda_L^*}  \frac{e^{ik\cdot x -2 i \gamma (k) t}}{\sqrt{\gamma (k)}} \left( \frac{1}{\sqrt{\gamma (k)}} \sum_{y \in \Lambda_L} e^{-ik\cdot y} p_y - i \sqrt{\gamma (k)} \sum_{y \in \Lambda_L} e^{-ik\cdot y} q_y \right) \\
&-\frac{i}{2 |\Lambda_L|} \sum_{k \in \Lambda_L^*}  \frac{e^{-ik\cdot x+2 i \gamma (k)t}}{\sqrt{\gamma (k)}}
\left( \frac{1}{\sqrt{\gamma (k)}} \sum_{y \in \Lambda_L} e^{ik\cdot y} p_y + i \sqrt{\gamma (k)} \sum_{y \in \Lambda_L} e^{ik\cdot y} q_y \right)
\end{split}
\end{equation*}
which implies
\begin{equation*}
\begin{split}
\tau_t^h (q_x) \, = & \sum_{y \in \Lambda_L} q_y \, \text{Re } \frac{1}{|\Lambda_L|} \sum_{k \in \Lambda_L^*} e^{i k \cdot (x-y) - 2 i \gamma (k) t} - \sum_{y \in \Lambda_L} p_y \, \text{Im } \frac{1}{|\Lambda_L|} \sum_{k \in \Lambda_L^*} \frac{1}{\gamma (k)} e^{i k \cdot (x-y) - 2 i \gamma (k) t}\,.
\end{split}
\end{equation*}
Analogously, we find
\begin{equation*}
\begin{split}
\tau_t^h (p_x) \, = & \sum_{y \in \Lambda_L} p_y \, \text{Re } \frac{1}{|\Lambda_L|} \sum_{k \in \Lambda_L^*} e^{i k \cdot (x-y) - 2 i \gamma (k) t} + \sum_{y \in \Lambda_L} q_y \, \text{Im } \frac{1}{|\Lambda_L|} \sum_{k \in \Lambda_L^*} \gamma (k) e^{i k \cdot (x-y) - 2 i \gamma (k) t}\,.
\end{split}
\end{equation*}
It is then easy to check that
\begin{equation*}
\begin{split}
\tau_t^h \left( \sum_{x \in \Lambda_L} q_x \, \text{Re } f_x + p_x \, \text{Im } f_x \right) = \sum_{x \in \Lambda_L} q_x  \text{Re } (f_t)_x + p_x \text{Im } (f_t)_x
\end{split}
\end{equation*}
with \[ f_{t} = f * \overline{h_{1,t}^{(L)}} + \overline{f} * h_{2,t}^{(L)} \] and where $h_{1,t}^{(L)}$ and $h_{2,t}^{(L)}$ are defined as in (\ref{eq:defh1t}), (\ref{eq:defh2t}). This proves (\ref{eq:evoweyl}).

\begin{remark} \label{rem:nb1} If we consider the Hamiltonian (\ref{eq:harham}) with $\omega=0$, then we can easily obtain analogous formulas for the time evolution of Weyl operators. In fact, if $\omega =0$, we can still define operators $P_k,Q_k$ as in (\ref{eq:Q+Pk}) and, for every $k \in \Lambda^*_L \backslash \{ 0 \}$, operators $b_k$ and $b_k^*$ exactly as in (\ref{eq:bkcom}). In terms of these operators, the Hamiltonian (\ref{eq:harham}) can be expressed, in the case $\omega=0$, as
\begin{equation*}
H_L^h \, (\omega=0) = \, P_{0}^2 \, + \, \sum_{k \in \Lambda_L^* \setminus \{ 0 \}} \, \gamma(k) \, \left( \, 2 \, b_k^* \, b_k \, + \, 1 \,\right).
\end{equation*}
Since $P_0$ commutes with $b_k, b^*_k$, for all $k \neq 0$, we obtain (using the commutation relation (\ref{eq:bkcom}) and (\ref{eq:qpkcom})) that
\begin{equation*}
\begin{split}
\tau_t^{h}(b_k) \, &= \, e^{-2i \gamma(k) t} \,  b_k, \qquad \tau_t^{h}(b_k^*) \, = \, e^{2i \gamma(k) t} \, b_k^*, \\ \tau_t^{h} (P_0) \, &=  \, P_0, \qquad \text{and } \quad \; \, \tau_t^{h} (Q_0) = Q_0 + 2 t P_0 \,.
\end{split}
\end{equation*}
{F}rom these formulae, we find that, in the case $\omega=0$,
\begin{equation*}
\tau_t^{h} \left( W(f) \right) \, = \, W \left( \, f * \overline{h^{(L)}_{0,1,t}} \, + \, \overline{f} * h^{(L)}_{0,2,t} \, \right),
\end{equation*}
with
\begin{equation*}
\begin{split}
h^{(L)}_{0,1,t}(x) \, &= \, \frac{(1-it)}{|\Lambda_L|} \, + \, \tilde{h}^{(L)}_{1,t}(x), \\
h^{(L)}_{0,2,t}(x) \, &= \, \frac{it}{|\Lambda_L|} \, + \, \tilde{h}^{(L)}_{2,t}(x).
\end{split}
\end{equation*}
and where
\begin{equation} \label{eq:defh1tnb}
\begin{split}
\tilde{h}^{(L)}_{1,t}(x) \, = \, & \frac{i}{2} {\rm Im} \left[ \frac{1}{| \Lambda_L|} \sum_{k \in \Lambda_L^* \setminus \{ k_0 \}} \left( \gamma(k) \, + \, \frac{1}{\gamma(k)} \right) \, e^{ik \cdot x - 2i \gamma(k) t} \, \right] \,
\\ &+ \,  {\rm Re} \left[  \frac{1}{| \Lambda_L|} \sum_{k \in \Lambda_L^* \setminus \{ k_0 \}} e^{ik \cdot x - 2i \gamma(k) t} \,\right],
\end{split}
\end{equation}
and
\begin{equation} \label{eq:defh2tnb}
\tilde{h}^{(L)}_{2,t}(x) \, = \,\frac{i}{2} {\rm Im} \left[ \frac{1}{| \Lambda_L|} \sum_{k \in \Lambda_L^* \setminus \{ k_0 \}} \left( \gamma(k) \, - \, \frac{1}{\gamma(k)} \right) \, e^{ik \cdot x - 2i \gamma(k) t} \, \right] .
\end{equation}
\end{remark}

\subsection{Estimates on Fourier Sums. Proof of Lemma \ref{lm:htx}}
\label{sec:htx}

The goal of this section is to prove Lemma \ref{lm:htx}. For $x \in \Lambda_L$, let
\begin{equation}\label{eq:Fou}
\begin{split}
H^{(0)}_L (t,x)  &=  {\rm Re} \frac{1}{|\Lambda_L|} \sum_{k \in \Lambda^*_L} e^{i \, k \cdot x \, - \, 2 \, i \, \gamma(k) \, t}  \\
H^{(1)}_L (t,x)  &=  {\rm Im} \frac{1}{|\Lambda_L|} \sum_{k\in \Lambda^*_L} \gamma(k) \; e^{i \, k \cdot x \, - \, 2 \, i \, \gamma(k) \, t} \\
H^{(-1)}_L (t,x) &=  {\rm Im} \frac{1}{|\Lambda_L|} \sum_{k\in \Lambda^*_L} \frac{1}{\gamma(k)} \; e^{i \, k \cdot x \, - \, 2 \, i \, \gamma(k) \, t}\,.
\end{split}
\end{equation}
Since $h^{(L)}_{1,t} (x) = H_L^{(0)} (t,x) + (i/2) (H_L^{(1)} (t,x) + H_L^{(-1)} (t,x))$ and $h^{(L)}_{2,t} (x)= (i/2) ( H_L^{(1)} (t,x) - H_L^{(-1)} (t,x))$, Lemma \ref{lm:htx} follows from the following exponential estimates on $H_L^{(m)} (t,x)$.

\begin{lemma}\label{lm:Htx}
Suppose that $H^{(m)}_L (t,x)$, for $m=-1,1,0$, is defined as in (\ref{eq:Fou}), with $\gamma (k)= (\omega^2 + 4\sum_{j=1}^{\nu} \lambda_j \sin^2 (k_j /2))^{1/2}$, and $\omega \geq 0$. Then we have
\begin{equation}\label{eq:Htx}
\begin{split}
|H^{(0)}_L (t,x)| &\leq \, e^{-\mu \left( |x| - c_{\omega,\lambda} \max \left( \frac{2}{\mu} \, , \, e^{(\mu/2)+1}\right) |t| \right)} \\
|H^{(1)}_L (t,x)| &\leq \, c_{\omega,\lambda} e^{\frac{\mu}{2}} \, e^{-\mu \left( |x| - c_{\omega,\lambda} \max \left( \frac{2}{\mu} \, , \, e^{(\mu/2)+1}\right) |t| \right)} \\
|H^{(-1)}_L (t,x)| &\leq \, c^{-1}_{\omega,\lambda} \, e^{-\mu \left( |x| - c_{\omega,\lambda} \max \left( \frac{2}{\mu} \, , \, e^{(\mu/2)+1}\right) |t| \right)}
\end{split}
\end{equation}
for all $\mu >0$, $x \in \Lambda_L$, $t \in \bR$, and $L >0$.
Here $c_{\omega,\lambda} = (\omega^2 + 4 \sum_{j=1}^{\nu} \lambda_j )^{1/2}$.
\end{lemma}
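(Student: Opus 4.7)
The plan is to combine two bounds on each $H^{(m)}_L(t,x)$: an exponentially-decaying contour-shift bound obtained from analyticity of $\gamma(k)^m e^{-2i\gamma(k)t}$ in a complex strip around $\bR^\nu$, and the trivial pointwise bound $|H^{(m)}_L(t,x)| \leq c_{\omega,\lambda}^m$. Matching the claimed Lieb--Robinson velocity $v(\mu) = c_{\omega,\lambda}\max(2/\mu,\,e^{\mu/2+1})$ will require two different choices of the contour-shift parameter $\sigma$, corresponding to the two regimes of the $\max$.

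First I would establish the contour-shift bound. Since $\gamma^2(k) = \omega^2 + 4\sum_j \lambda_j \sin^2(k_j/2)$ is entire, $\gamma$ extends analytically to any complex strip (with the principal branch well-defined since $\omega > 0$; the case $\omega = 0$ is handled via Remark~\ref{rem:nb1} and the modified Fourier sums $\tilde h^{(L)}_{m,t}$). The elementary estimate $|\sin((k_j + i\eta_j)/2)|^2 \leq \cosh^2(\eta_j/2) \leq e^{|\eta_j|}$ yields
\begin{equation*}
|\gamma(k + i\eta)| \leq c_{\omega,\lambda}\,e^{\sigma/2}, \qquad |e^{-2i\gamma(k+i\eta)t}| \leq e^{2|t|\,c_{\omega,\lambda}\,e^{\sigma/2}}
\end{equation*}
uniformly for real $k$ and $|\eta_j| \leq \sigma$. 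Applying Poisson summation, $H^{(m)}_L(t,x)$ equals the real or imaginary part of $\sum_{n \in \bZ^\nu}\hat f_m(-x+2Ln)$, where $\hat f_m$ is the Fourier coefficient of $\gamma^m e^{-2i\gamma t}$; shifting the contour in each summand by $i\eta$ with $\eta_j = -\sigma\,\mathrm{sgn}(-x_j + 2Ln_j)$ produces a factor $e^{-\sigma|-x+2Ln|_1}$. The geometric inequality $|-x+2Ln|_1 \geq |x|_1$ for $x \in \Lambda_L$, strict by at least $L - |x_j|$ for each coordinate with $n_j \neq 0$, absorbs the aliasing into an $O(1)$ factor and gives the contour-shift bound
\begin{equation*}
|H^{(m)}_L(t,x)| \leq (c_{\omega,\lambda}\,e^{\sigma/2})^m\,e^{2c_{\omega,\lambda}e^{\sigma/2}|t|}\,e^{-\sigma|x|}, \qquad \sigma > 0.
\end{equation*}

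For $m = 0$ and $m = 1$, I complete the proof by case analysis in $\mu$. For $\mu \geq 2/e$ I take $\sigma = \mu$: the inequality $\mu e \geq 2$ gives $2e^{\mu/2} \leq \mu e^{\mu/2+1}$, hence $2c_{\omega,\lambda}e^{\mu/2} \leq \mu v(\mu)$, and the contour-shift bound matches the target with prefactor $(c_{\omega,\lambda}e^{\mu/2})^m$. For $\mu < 2/e$, the trivial bound $|H^{(m)}_L| \leq c_{\omega,\lambda}^m$ already dominates the target on the ``light cone'' $|x| \leq v(\mu)|t|$ (extended by $1/(2\mu)$ in the $m=1$ case); outside this cone I apply the contour-shift bound with $\sigma = 2$, the value which maximizes $\sigma e^{-\sigma/2}$ at $2/e$. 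A direct computation shows that the assumption $\mu \leq 2/e$ is exactly what is needed for the shifted bound at $\sigma = 2$ to dominate the target outside the cone. For $m = -1$ a direct contour shift would require a lower bound on $|\gamma(k+i\eta)|$, which is not clean; I circumvent this by using the symmetry $\gamma(-k) = \gamma(k)$ (which kills the odd-in-$k$ part of the Fourier sum) and the identity $\sin(2\gamma t)/\gamma = 2\int_0^t \cos(2\gamma s)\,ds$ to obtain
\begin{equation*}
H^{(-1)}_L(t,x) = -2\int_0^t H^{(0)}_L(s,x)\,ds.
\end{equation*}
Integrating the already-established bound on $H^{(0)}_L$ and using $\mu v(\mu) \geq 2c_{\omega,\lambda}$ yields the prefactor $2/(\mu v(\mu)) \leq c_{\omega,\lambda}^{-1}$.

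The main technical obstacle I anticipate is the case analysis in the combination step for $m=0,1$: the precise matching of the $\max(2/\mu, e^{\mu/2+1})$ structure in $v(\mu)$ forces the two distinct shifts $\sigma = \mu$ (for $\mu \geq 2/e$) and $\sigma = 2$ (for $\mu < 2/e$), with the threshold $\mu = 2/e$ arising exactly from the maximum of $\sigma e^{-\sigma/2}$, and one must carefully verify in each regime that the combined (shifted + trivial) bound gives the precise prefactors $1$ and $c_{\omega,\lambda}e^{\mu/2}$ claimed. The Poisson-summation aliasing is a more routine issue, handled uniformly in $L$ by the geometric estimate on translates in $\Lambda_L$.
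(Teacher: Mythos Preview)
Your contour-shift strategy is natural, but there is a real gap in the analyticity claim. You assert that because $\gamma^2(k)$ is entire and $\omega>0$, the principal branch of $\gamma(k)=\sqrt{\gamma^2(k)}$ extends analytically to \emph{any} complex strip. This is false: for $k$ near $0$ and $|\eta_j|$ large enough that $\cos k_j\cosh\eta_j>1+\omega^2/(2\sum_j\lambda_j)$, the quantity $\gamma^2(k+i\eta)=\omega^2+2\sum_j\lambda_j(1-\cos k_j\cosh\eta_j+i\sin k_j\sinh\eta_j)$ crosses the negative real axis, and $\sqrt{\,\cdot\,}$ has a branch cut there. Since your argument requires shifts $\sigma=\mu$ (arbitrarily large) or $\sigma=2$, both of which can exceed this $(\omega,\lambda)$-dependent threshold, the deformation as written is invalid. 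The fix is the very symmetry observation you already invoke for $m=-1$: since $\gamma(-k)=\gamma(k)$, the sums reduce to $\frac{1}{|\Lambda_L|}\sum_k e^{ik\cdot x}\cos(2\gamma(k)t)$, $-\frac{1}{|\Lambda_L|}\sum_k e^{ik\cdot x}\,\gamma(k)\sin(2\gamma(k)t)$, and $-\frac{1}{|\Lambda_L|}\sum_k e^{ik\cdot x}\,\sin(2\gamma(k)t)/\gamma(k)$, whose integrands are entire functions of $\gamma^2$ and hence of $k$; then the shift is legitimate for every $\sigma>0$, and this also removes the need to treat $\omega=0$ separately.

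With that repair your route is genuinely different from the paper's and essentially works, though two smaller points remain. First, the Poisson aliasing: for $x_j$ near $\pm L$ the nearest alias $x-2Le_j$ satisfies $|x-2Le_j|_1=|x|_1+2(L-|x_j|)$, which can equal $|x|_1$, so the ``$O(1)$ factor'' you mention is not $1$; you will not recover the exact prefactors $1$, $c_{\omega,\lambda}e^{\mu/2}$, $c_{\omega,\lambda}^{-1}$ this way. Second, your case analysis at $\mu=2/e$ with shifts $\sigma=\mu$ and $\sigma=2$ is correct once the contour shift is justified. By contrast, the paper avoids both analyticity and aliasing entirely: it Taylor-expands $e^{-2i\gamma t}$ in $t$, uses $\gamma(-k)=\gamma(k)$ to keep only even powers $\gamma^{2n}$, expands these multinomially, and applies the \emph{discrete} orthogonality on $\Lambda_L^*$ to see that $\sin^{2m_j}(k_j/2)$ has Fourier support in $|x_j|\le m_j$; this forces $n\ge|x|$ in the surviving terms and yields the sharp tail $\sum_{n\ge|x|}(2c_{\omega,\lambda}|t|)^{2n}/(2n)!$, from which Stirling gives the stated constants directly. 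Your Paley--Wiener argument is more conceptual; the paper's is more elementary and delivers the constants without loss.
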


\begin{proof}[Proof of Lemma \ref{lm:Htx}]
We first prove (\ref{eq:Htx}) for $m=0$.
Since $m=0$ throughout this proof, and also $L$ is fixed, we will use here the shorthand notation $H (t,x)$ for $H^{(0)}_L (t,x)$. We start by expanding the exponent $e^{-2i\gamma(k)t}$;
\begin{equation*}
\begin{split}
H(t,x) = \; & \text{Re} \frac{1}{|\Lambda_L|} \sum_{k \in \Lambda^*_L} e^{ik \cdot x} \sum_{n \geq 0} \frac{(-2it \gamma (k))^n}{n!} \\
= \; & \text{Re} \sum_{n \geq 0}  \frac{(-1)^n 4^n t^{2n}}{(2n)!} \;
\frac{1}{|\Lambda_L|} \sum_{k \in \Lambda^*_L} e^{ik \cdot x} \gamma^{2n} (k) \\ &+ 2 \,
\text{Im} \sum_{n \geq 0}  \frac{(-1)^n 4^n t^{2n+1}}{(2n+1)!} \;
\frac{1}{|\Lambda_L|} \sum_{k \in \Lambda^*_L} e^{ik \cdot x} \gamma^{2n+1} (k)\,.
\end{split}
\end{equation*}
The second term vanishes because $\gamma (-k) = \gamma (k)$. As for the first term we expand the exponent $\gamma^{2n} (k)$. We find
\begin{equation}\label{eq:htx1}
\begin{split}
H(t,x)
= \; & \sum_{n \geq 0}  \frac{(-1)^n 4^n t^{2n}}{(2n)!} \sum_{\stackrel{m_0, m_1, \dots, m_{\nu} \geq 0}{m_0+ \dots + m_{\nu} = n}} \frac{n!}{m_0! m_1! \dots m_{\nu}!}  \, \omega^{2m_0} \\ & \hspace{5cm} \times  \prod_{j=1}^{\nu} (4\lambda_j)^{m_j} \frac{1}{2L} \sum_{\stackrel{k_j = \frac{\pi}{L} \ell:}{ \ell=-L+1, \dots L}}  e^{ik_j x_j}  \sin^{2m_j} (k_j/2)\,.
\end{split}\end{equation}
Next we note that, for $-L < x_j \leq L$,
\begin{equation}\label{eq:mjxj}
\frac{1}{2L} \sum_{\stackrel{k_j = \frac{\pi}{L} \ell:}{ \ell=-L+1, \dots L}}  e^{ik_j x_j}  \sin^{2m_j} (k_j/2) = 0
\end{equation}
if $|x_j| > m_j$. This follows from the orthogonality relation
\[ \frac{1}{2L} \sum_{\stackrel{k = \frac{\pi}{L} \ell:}{ \ell=-L+1, \dots L}} e^{ik x} = \delta_{x,0} \] if $x \in \Lambda_L$, and from the
observation that
\begin{equation}
\begin{split}
e^{ik_j x_j} \sin^{2m_j} (k_j /2) &= e^{ik_j x_j} \frac{(1- \cos k_j)^{m_j}}{2^{m_j}} \\ &= \frac{1}{2^{m_j}} \sum_{\ell = 0}^{m_j} {m_j \choose \ell} \frac{(-1)^{\ell}}{2^{\ell}} \sum_{p=0}^{\ell} {\ell \choose p} e^{i(x_j + 2p-\ell)k_j} \, .\end{split}
\end{equation} Since $-m_j \leq -\ell \leq 2p - \ell \leq \ell \leq m_j$, we obtain (\ref{eq:mjxj}). Since moreover
\[ \Big| \frac{1}{2L} \sum_{\stackrel{k_j = \frac{\pi}{L} \ell:}{ \ell=-L+1, \dots L}}  e^{ik_j x_j}  \sin^{2m_j} (k_j/2) \Big| \leq 1 \] for all $x_j$ and $m_j$, we obtain, from (\ref{eq:htx1}),
\begin{equation}\label{eq:t2=1}
\begin{split}
|H (t,x) | \leq \; & \sum_{n \geq |x|}  \frac{4^n t^{2n}}{(2n)!} \sum_{\stackrel{m_0, m_1, \dots, m_{\nu} \geq 0}{m_0+ \dots + m_{\nu} = n}} \frac{n!}{m_0! m_1! \dots m_{\nu}!}  \, \omega^{2m_0} \prod_{j=1}^{\nu} (4\lambda_j)^{m_j} \\ = \; &
\sum_{n \geq |x|}  \frac{(2 c_{\omega,\lambda} t)^{2n}}{(2n)!}
\end{split}
\end{equation}
where we put $c_{\omega,\lambda} = (\omega^2 + 4 \sum_{j=1}^{\nu} \lambda_j)^{1/2}$.
The previous inequality implies that
\begin{equation} \label{eq:smalltime}
\begin{split}
|H(t,x)| \leq \; & \sum_{n \geq |x|} \frac{(2 c_{\omega,\lambda} |t|)^{2n}}{(2n)!} \leq \frac{(2c_{\omega,\lambda} |t|)^{2|x|}}{(2|x|)!} \, e^{2 c_{\omega,\lambda} |t|}\,.
\end{split}
\end{equation}
Using Stirling formula, we find, for arbitrary $\mu >0$ and for
$|x| > |t|  c_{\omega,\lambda} e^{(\mu/2)+1}$,
\begin{equation*}
\begin{split}
|H(t,x)| \leq \; & e^{-\mu \left( |x| - \frac{2 c_{\omega,\lambda}}{\mu} |t| \right)}\,.
\end{split}\end{equation*}
Since, by definition $|H(t,x)| \leq 1$ for all $x\in \bZ^{\nu}$ and $t \in \bR$, we obtain immediately that
\[ |H(t,x)| \leq e^{-\mu \left( |x| - c_{\omega,\lambda} \text{max} \left( \frac{2}{\mu} \, , \, e^{(\mu/2)+1}\right) |t| \right)} \]
for arbitrary $\mu>0$.

The case $m=1$ is handled analogously.
For the case $m=-1$ we note that
\begin{equation}
H_L^{(-1)}(t,x) = -2 \int_0^t H_L^{(0)}(s,x) ds
\end{equation}
and then use the bound already obtained for the case $m=0$.
\end{proof}

\setcounter{equation}{0}

\section{Lieb-Robinson Inequalities for Anharmonic Lattice Systems}
\label{sec:anharm}

In this section we consider perturbations of the harmonic lattice system described by the Hamiltonian $H_L^{h}$ defined in (\ref{eq:harham}).
Specifically, for a cube $\Lambda_L = (-L,L]^{\nu} \subset \bZ^{\nu}$, we consider the anharmonic Hamiltonian
\begin{equation}\label{eq:anharham}
\begin{split}
H_L &= H_L^h \, + \, \sum_{x\in \Lambda_L} V(q_x) \\ &= \sum_{x \in \Lambda_L} p_x^2 \, + \, \omega^2 \, q_x^2 \, +
\, \sum_{x \in \Lambda_L} \sum_{j = 1}^{\nu}  \lambda_j \, (q_{ x } - q_{ x + e_j})^2 \, +\,  \sum_{x \in \Lambda_L} V (q_x)\,.
\end{split}
\end{equation}
We denote the dynamics generated by $H_L$ on the algebra $\cA_{\Lambda_L}$ by $\tau^L_t$; that is
\[ \tau_t^L (A) = e^{it H_L } A \, e^{-i t H_L } \, \qquad \text{for } A \in \cA_{\Lambda_L}. \]
The main result of this section will provide estimates in terms of the function
\begin{equation*}
F_\mu (r) = \frac{e^{-\mu r}}{(1+r)^{\nu+1}}\, .
\end{equation*}
Since the distance function $d$ is a metric, we clearly have
\begin{equation}\label{eq:Fconv}
\sum_{z\in \Lambda_L} F_\mu(d(x,z))F_\mu(d(z,y))\leq C_\nu F_\mu(d(x,y))
\end{equation}
with
\begin{equation}\label{eq:Cnu}
C_\nu=2^{\nu+1}\sum_{z\in\Lambda_L}\frac{1}{(1+|z|)^{\nu+1}}.
\end{equation}

\begin{theorem}\label{thm:anharm}
Suppose that $V \in C^1 (\bR)$ is real valued with $V' \in L^1 (\bR)$ such
that
\begin{equation}\label{eq:Vassu}
\kappa_{V} = \int \rd w \,  |\widehat{V'} (w)| |w|  < \infty \,.
\end{equation}
Then, for every $\mu\geq 1$, and $\epsilon >0$, there exists a constant $C$, such that
for every pair of finite sets $X,Y \subset \bZ^{\nu}$ and $L >0$ such that $X,Y \subset \Lambda_L$, we have
\begin{equation}\label{eq:anharm}
\Big\| \left[ \, \tau_t^L (W(f)) , W (g) \, \right] \Big\| \leq  C \, \|f\|_{\infty} \| g \|_{\infty} \,
e^{(\mu+\epsilon)v |t|}\sum_{x\in X, y\in Y} \, F_\mu(d(x,y))
\end{equation}
for all bounded functions $f, g$ with $\supp f \subset X$ and $\supp g \subset Y$.
Here
\begin{equation*}
C \, = \,  (2 + c_{\omega, \lambda} e^{\frac{( \mu + \epsilon)}{2}} + c_{\omega, \lambda}^{-1} ) \, \sup_{s \geq 0}  \left[ (1+s)^{\nu + 1} e^{- \epsilon s} \right] ,
\end{equation*}
and
\begin{equation*}
v(\mu + \epsilon) \, = \, v_h(\mu + \epsilon) +\frac{C C_\nu \kappa_V}{\mu+\epsilon}\, ,
\end{equation*}
with $v_h(\mu+\epsilon)$ defined in (\ref{eq:vharm}).
\end{theorem}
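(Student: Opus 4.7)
The plan is to control $\bigl\|[\tau_t^L(W(f)),W(g)]\bigr\|$ by interpolating between the anharmonic dynamics $\tau_t^L$ and the harmonic one $\tau_t^h$, for which Corollary~\ref{cor:harlrb} already supplies a Lieb-Robinson bound. Writing $V_L=\sum_{x\in\Lambda_L}V(q_x)$, I start from the Duhamel identity
\begin{equation*}
\tau_t^L(W(f))=\tau_t^h(W(f))+i\int_0^t \tau_{t-s}^h\!\bigl([V_L,\tau_s^L(W(f))]\bigr)\,ds,
\end{equation*}
take the commutator with $W(g)$, and use the norm invariance of $\tau_{t-s}^h$ together with Lemma~\ref{lem:weylevo} (which sends $W(g)$ to another Weyl operator $W(\tilde g_{t-s})$ whose function decays spatially with velocity $v_h$ by Lemma~\ref{lm:htx}) to obtain
\begin{equation*}
\bigl\|[\tau_t^L(W(f)),W(g)]\bigr\|\leq \bigl\|[\tau_t^h(W(f)),W(g)]\bigr\|+\int_0^t \bigl\|[[V_L,\tau_s^L(W(f))],W(\tilde g_{t-s})]\bigr\|\,ds.
\end{equation*}
The first term is handled directly by Corollary~\ref{cor:harlrb} and produces the $v_h(\mu+\epsilon)$ piece of the final velocity.

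For the remaining Duhamel term I will convert each local nonlinearity $V(q_x)$ into Weyl operators via Fourier inversion. Since $V'\in L^1$ and $V(q)-V(0)=\int_0^q V'(u)\,du$, one has
\begin{equation*}
V(q_x)-V(0)=\frac{1}{\sqrt{2\pi}}\int dw\,\frac{\widehat{V'}(w)}{iw}\bigl(W(we_x)-\idty\bigr),
\end{equation*}
so that $[V_L,\tau_s^L(W(f))]=\sum_{x\in\Lambda_L}\frac{1}{\sqrt{2\pi}}\int dw\,\frac{\widehat{V'}(w)}{iw}[W(we_x),\tau_s^L(W(f))]$. Each commutator in this expansion is itself an instance of the quantity the theorem estimates (with the target set $Y$ replaced by the single site $\{x\}$ and the right Weyl argument being $we_x$), and it vanishes linearly in $w$ at the origin, which guarantees absolute integrability against $\widehat{V'}(w)/w$. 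Expanding the double commutator $[[W(we_x),\tau_s^L(W(f))],W(\tilde g_{t-s})]$ by the Jacobi identity produces two kinds of contribution: a ``return branch'' proportional to $\|[\tau_s^L(W(f)),W(\tilde g_{t-s})]\|$ which feeds back inductively into $\tilde M(X,Y;t):=\|[\tau_t^L(W(f)),W(g)]\|/(\|f\|_\infty\|g\|_\infty)$ after the site-by-site factorization $W(\tilde g_{t-s})=\prod_{y'}W((\tilde g_{t-s})_{y'}e_{y'})$; and a purely Weyl-algebra contribution bounded by (\ref{eq:weylrel}) as $\|[W(we_x),W(\tilde g_{t-s})]\|\leq|w||(\tilde g_{t-s})_x|$. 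The $|w|$ from this Weyl commutator combines with the $\widehat{V'}(w)/w$ in the Fourier weight, so that the $w$-integral produces exactly the Fourier constant $\kappa_V$ announced in the statement.

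Assembling the pieces yields an integral inequality of the schematic form
\begin{equation*}
\tilde M(X,Y;t)\leq C\, e^{(\mu+\epsilon)v_h|t|}\,G(X,Y)+C'\kappa_V\int_0^t \sum_{x'\in\Lambda_L}F_\mu(d(x',Y))\,\tilde M(X,\{x'\};s)\,ds,
\end{equation*}
with $G(X,Y):=\sum_{x\in X,\,y\in Y}F_\mu(d(x,y))$. Inserting the inductive ansatz $\tilde M(X,\{x'\};s)\leq K(s)\sum_{x\in X}F_\mu(d(x,x'))$ and using the convolution estimate~(\ref{eq:Fconv}), which collapses $\sum_{x'}F_\mu(d(x',Y))F_\mu(d(x,x'))\leq C_\nu F_\mu(d(x,Y))$ and thus reproduces $G(X,Y)$ up to the factor $C_\nu$, reduces the inequality to a scalar Gronwall in $K(t)$; its solution produces the announced velocity $v=v_h(\mu+\epsilon)+CC_\nu\kappa_V/(\mu+\epsilon)$. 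The main obstacle is closing this bootstrap cleanly: the site-by-site decomposition of $W(\tilde g_{t-s})$ must exploit the spatial decay of $\tilde g_{t-s}$ supplied by Lemma~\ref{lm:htx} in order to keep the weighted sums convergent, and the $w$-integrals must be organized so that the small-$w$ regime (where $[W(we_x),\tau_s^L(W(f))]\sim|w|$) and the large-$w$ regime together give the constant $\kappa_V$ rather than something larger. Domain issues arising from the unboundedness of $H_L$ and $V_L$ are standard and can be handled on a suitable dense invariant subspace for the Duhamel differentiation.
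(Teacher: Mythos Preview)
Your interpolation idea is the same as the paper's, but you run it in the wrong direction, and this creates a genuine gap. You use
\[
\tau_t^L(W(f))=\tau_t^h(W(f))+i\int_0^t \tau_{t-s}^h\bigl([V_L,\tau_s^L(W(f))]\bigr)\,ds,
\]
i.e.\ harmonic evolution on the outside and the full anharmonic $\tau_s^L$ on the inside. The paper does the opposite: it differentiates $s\mapsto \tau_s\bigl(\tau_{t-s}^h(W(f))\bigr)$, so that the \emph{harmonic} evolution sits inside. This is not a matter of taste. With the paper's choice one has $\tau_{t-s}^h(W(f))=W(f_{t-s})$, a Weyl operator, and therefore $[V(q_z),W(f_{t-s})]$ can be computed \emph{exactly} via the shift relation (\ref{eq:shift}):
\[
[V(q_z),W(f_{t-s})]=W(f_{t-s})\bigl(V(q_z-\operatorname{Im}f_{t-s}(z))-V(q_z)\bigr).
\]
The factor $\operatorname{Im}f_{t-s}(z)$ carries the spatial decay in $z$ (via Lemma~\ref{lm:htx}) and makes the sum over $z\in\Lambda_L$ convergent. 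After a Leibniz split, one term is norm-preserving (handled by a unitary $\mathcal U(s;\tau)$), and the other is $\tau_s(W(f_{t-s}))\,[\tau_s(V(q_z-a)-V(q_z)),W(g)]$; writing $V(q_z-a)-V(q_z)=-a\int_0^1 V'(q_z-ra)\,dr$ and Fourier-expanding $V'$ produces exactly $[\tau_s(e^{iwq_z}),W(g)]$, the same quantity being bounded, with $X$ replaced by $\{z\}$. The $|w|$ in $\kappa_V$ arises because $e^{iwq_z}=W(we_z)$ has $\|we_z\|_\infty=|w|$.

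In your direction, $\tau_s^L(W(f))$ is \emph{not} a Weyl operator, so you are forced to Fourier-decompose $V(q_x)$ itself and then confront the double commutator $[[W(we_x),\tau_s^L(W(f))],W(\tilde g_{t-s})]$. Your Jacobi ``return branch'' $[W(we_x),[\tau_s^L(W(f)),W(\tilde g_{t-s})]]$ is the problem: bounding the outer commutator by $2\|[\tau_s^L(W(f)),W(\tilde g_{t-s})]\|$ loses all dependence on both $w$ and $x$. Consequently (i) the $w$-integral against $\widehat{V'}(w)/w$ is not controlled by $\kappa_V$ (nor even by $\|\widehat{V'}\|_1$), and (ii) the sum $\sum_{x\in\Lambda_L}$ coming from $V_L=\sum_x V(q_x)$ contributes a factor $|\Lambda_L|$, destroying volume-independence. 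The site-by-site factorization of $W(\tilde g_{t-s})$ does not cure this: after factorizing and applying Jacobi again, the analogous return term is still independent of $x$ and carries no $|w|$. So the schematic integral inequality you wrote down cannot be derived from your decomposition; to close the iteration you need to reverse the roles of $\tau_s$ and $\tau_{t-s}^h$ and exploit the exact identity $[V(q_z),W(f_{t-s})]=W(f_{t-s})(V(q_z-\operatorname{Im}f_{t-s}(z))-V(q_z))$, as the paper does.
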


\begin{corollary}\label{cor:anharm}
Analogously to Corollary \ref{cor:harlrb}, the theorem implies a bound of the form
\begin{equation*}
\Big\| \left[ \, \tau_t^L (W(f)) , W (g) \, \right] \Big\| \leq   \tilde{C} \, \|f\|_{\infty} \| g \|_{\infty} \,\min (|X|, |Y|) \, e^{- \mu \left( d(X,Y) - (1+\frac{\epsilon}{\mu}) v(\mu+\epsilon) |t| \right)}
\end{equation*}
for all $\mu,\epsilon >0$ and where
$$
\tilde{C}= C \sum_{z\in \mathbb{Z}^\nu} \frac{1}{(1+|z|)^{\nu+1}}\, ,
$$
and $d(X,Y)$ denotes the distance between the supports $X$ and $Y$.
\end{corollary}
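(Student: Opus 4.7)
The plan is to deduce the corollary directly from Theorem \ref{thm:anharm} by a simple resummation of the double sum $\sum_{x\in X,\,y\in Y} F_\mu(d(x,y))$ on the right-hand side of \eqref{eq:anharm}. The polynomial prefactor in $F_\mu(r)=e^{-\mu r}/(1+r)^{\nu+1}$ is exactly what makes this reshaping possible: it guarantees summability of $\sum_{z\in\mathbb{Z}^\nu}(1+|z|)^{-\nu-1}$, which will produce the ratio $\tilde C/C$, while the exponential part will carry the single factor $e^{-\mu d(X,Y)}$ that one wants to isolate.

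The key step is the observation that $d$ is a metric and $d(X,Y)$ is defined as the minimum over $X\times Y$, so $d(x,y)\geq d(X,Y)$ for every $(x,y)\in X\times Y$. Replacing the exponential $e^{-\mu d(x,y)}$ by $e^{-\mu d(X,Y)}$ in each term and keeping only the polynomial factor inside the sum yields
\begin{equation*}
\sum_{x\in X,\,y\in Y} F_\mu(d(x,y)) \;\leq\; e^{-\mu d(X,Y)} \sum_{x\in X}\sum_{y\in Y}\frac{1}{(1+d(x,y))^{\nu+1}}.
\end{equation*}
Next I would free one of the indices: assuming $|X|\leq|Y|$ (otherwise the roles of $X$ and $Y$ are swapped), extend the inner sum from $y\in Y$ to $y\in\mathbb{Z}^\nu$ and translate so that $x$ becomes the origin. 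Since the torus distance between sites of $\Lambda_L$ is dominated by their $\mathbb{Z}^\nu$-distance, this extension only weakens the bound and produces a constant uniform in $L$. The inner sum is then majorized by $\sum_{z\in\mathbb{Z}^\nu}(1+|z|)^{-\nu-1}$, independently of $x$, while the outer $x$-sum contributes the factor $|X|=\min(|X|,|Y|)$.

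Inserting these estimates into the bound of Theorem \ref{thm:anharm} and rewriting the combined exponential factor as
\[
e^{-\mu d(X,Y)}\,e^{(\mu+\epsilon)v(\mu+\epsilon)|t|} \;=\; \exp\!\Bigl\{-\mu\bigl(d(X,Y)-(1+\epsilon/\mu)\,v(\mu+\epsilon)\,|t|\bigr)\Bigr\}
\]
produces exactly the statement of Corollary \ref{cor:anharm}, with the constant $\tilde C = C\sum_{z\in\mathbb{Z}^\nu}(1+|z|)^{-\nu-1}$ as claimed.

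There is no genuine analytical difficulty here: the corollary is a bookkeeping consequence of the theorem, once $F_\mu$ has been chosen with enough polynomial decay to be summable over $\mathbb{Z}^\nu$. The only point that warrants real attention is the reduction from summation over the torus $\Lambda_L$ to summation over $\mathbb{Z}^\nu$, where one must verify that the constant $\tilde C$ obtained in this way is actually independent of $L$; this is precisely why $F_\mu$ was designed with a $(1+r)^{-\nu-1}$ prefactor in the first place, so that the tail estimate already appears in \eqref{eq:Fconv}--\eqref{eq:Cnu} and no further geometric input is needed.
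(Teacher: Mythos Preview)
Your argument is correct and is the natural way to derive the corollary from Theorem~\ref{thm:anharm}; the paper itself gives no separate proof, merely pointing to Corollary~\ref{cor:harlrb}. Note that the mechanism here is actually simpler than in Corollary~\ref{cor:harlrb}: there, with only $e^{-\mu d(x,y)}$ available, one must sacrifice part of the exponential decay via the parameter $0<a<1$ to obtain a summable tail, whereas here the polynomial prefactor $(1+r)^{-\nu-1}$ in $F_\mu$ supplies summability for free and lets you keep the full $e^{-\mu d(X,Y)}$.

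One wording issue worth tightening: the sentence ``the torus distance is dominated by the $\mathbb{Z}^\nu$-distance'' would, taken at face value, push the inequality for $(1+d(x,y))^{-\nu-1}$ the wrong way. What you actually need is that, by translation invariance on the torus, $\sum_{y\in\Lambda_L}(1+d(x,y))^{-\nu-1}=\sum_{z\in\Lambda_L}(1+d(0,z))^{-\nu-1}$, and for $z$ in the fundamental domain $(-L,L]^\nu$ one has $d(0,z)=|z|$ exactly; then extending the sum from $\Lambda_L$ to $\mathbb{Z}^\nu$ only adds nonnegative terms. This is a cosmetic fix, not a gap.
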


\begin{proof}
We are going to interpolate between the time evolution $\tau_t^L$ (generated by the Hamiltonian (\ref{eq:anharham})) and the harmonic time evolution
$\tau_t^{h; \Lambda_L}$ generated by (\ref{eq:harham}); to simplify the notation we will drop all the $L$ dependence in $H_L$ and $H^h_L$ and
in the dynamics $\tau_t^L$ and $\tau_t^{h; \Lambda_L}$. We start by noting that
 \[ \left[ \tau_t \left( W(f) \right), W (g) \right] \, = \, \left[  \tau_s \left( \tau_{t-s}^h \left( W (f) \right) \right), W(g) \right] \Big|_{s=t} \; . \]
This leads us to the study of
\begin{equation}\label{eq:dds1}
\begin{split}
\frac{\rd}{\rd s} \; \Big[ \tau_s \Big(\tau^h_{t-s} \Big( &W (f) \Big)\Big), W(g) \Big] \\ = \; &i \left[ \tau_s \left( \left[ \; \sum_{z \in \Lambda_L} V(q_z),  \tau^h_{t-s} \left( W (f) \right) \right] \right) , W(g) \right] \\ = \; &i \sum_{z \in \Lambda_L} \left[ \, \tau_s \left( \left[ V(q_z), W (f_{t-s}) \right] \right) , W(g) \right]
\end{split}
\end{equation}
where we used Lemma~\ref{lem:weylevo} to compute the harmonic evolution of the Weyl operator $W(f)$, and the shorthand notation
\begin{equation}\label{eq:j}
f_{t} = f * \overline{h}^{(L)}_{1,t} + \overline{f}* h^{(L)}_{2,t}
\end{equation}
to denote the harmonic evolution of the wave function $f$. Using (\ref{eq:shift}), we easily obtain that
\begin{equation*}
\begin{split}
[ V(q_z), W (f_{t-s}) ] = \; &W (f_{t-s}) \left( W^* (f_{t-s}) V (q_z) W(f_{t-s}) - V (q_z) \right) \\ = \; &W(f_{t-s}) \left( V (q_z- \text{Im } f_{t-s} (z)) - V (q_z)\right)\,.
\end{split}
\end{equation*}
Inserting the last equation in (\ref{eq:dds1}) we find
\begin{equation}\label{eq:dds2}
\begin{split}
\frac{\rd}{\rd s} \; \Big[ \tau_s \Big( \tau^h_{t-s} \Big(&W (f)\Big)\Big) , W(g) \Big] \\ = \; &i \sum_{z \in \Lambda_L} \left[ \tau_s \left(W(f_{t-s}) \left( V(q_z - \, \text{Im} f_{t-s} (z)) - V(q_z) \right)  \right) , W(g) \right] \\
= \; &i \sum_{z \in \Lambda_L} \left[ \tau_s \left( \tau^h_{t-s} \left( W (f)\right) \right) , W(g) \right] \, \tau_s \left( V(q_z - \text{Im} f_{t-s} (z)) - V(q_z) \right) \\ &+ i \sum_{z \in \Lambda_L} \tau_s \left( \tau^h_{t-s} \left( W (f) \right) \right) \, \left[  \tau_s \left(  V(q_z - \text{Im} f_{t-s} (z)) - V(q_z) \right) , W(g) \right]\,.
\end{split}
\end{equation}
Next, we define a unitary evolution $\cU (s;\tau)$ by
\[ i\frac{\rd}{\rd s} \cU (s;\tau) = \cL (s) \cU(s;\tau), \quad \text{and }  \cU (\tau;\tau) = 1 \] with the time-dependent generator
\[ \cL (s) = \sum_{z \in \Lambda_L} \tau_s \left( V(q_z - \text{Im} f_{t-s} (z)) - V(q_z) \right) \,. \] (Here $t \geq 0$ is a fixed parameter).
Then, by (\ref{eq:dds2}), we have
\begin{equation*}
\begin{split}
\frac{\rd}{\rd s} \; \Big[ \tau_s &\left( \tau^h_{t-s} \left(W (f)\right) \right), W(g) \Big] \cU (s;0) \\
 & = i \sum_{z \in \Lambda_L} \tau_s \left( W(f_{t-s})\right)  \left[  \tau_s \left( V(q_z -\text{Im} f_{t-s} (z)) - V(q_z)  \right) , W(g) \right] \cU (s;0)
\end{split}
\end{equation*}
which implies that
\begin{equation}\label{eq:intds}
\begin{split}
\Big[ \tau_t \left( W (f) \right)& , W(g) \Big] \cU (t;0)\\ = \; & \Big[ \tau_t^h \left( W (f) \right) , W(g) \Big] \\ &+i \sum_{z \in \Lambda_L} \int_0^t \rd s \; \tau_s \left( W(f_{t-s}) \right) \; \left[  \tau_s \left(  V(q_z - \text{Im} f_{t-s} (z)) - V(q_z) \right) , W(g) \right] \cU (s;0)\,.
\end{split}
\end{equation}
Next, we expand
\begin{equation*}
\begin{split}
\left( V(q_z - \text{Im} f_{t-s} (z)) - V(q_z) \right) = \; &- \, \text{Im} f_{t-s} (z) \int_0^1 \rd r \; V' (q_z - r \, \text{Im} f_{t-s} (z))  \\ = &\; - \, \text{Im} f_{t-s} (z) \int_0^1 \rd r \int \rd w \; \widehat{V'} (w) e^{iw (q_z - \, r \, \text{Im} f_{t-s} (z))}\,.
\end{split}
\end{equation*}
where the Fourier transform $\widehat V'$ is defined as
\[ \widehat{V'} (w) = \int \frac{\rd q}{(2\pi)^{\nu}} \, V' (q) e^{-iq \cdot w}\,. \]
{F}rom (\ref{eq:intds}) we obtain
\begin{equation*}
\begin{split}
\Big[ \tau_t \left(W (f)\right) , W(g) \Big] = \; &\Big[ \tau^h_t \left( W (f) \right) , W(g) \Big] \cU(0;t) \\ &- \, i  \sum_{z \in \Lambda_L} \int_0^t \rd s \; \text{Im} f_{t-s} (z) \int_0^1 \rd r \, \int \rd w \, \widehat{V'} (w) \, e^{-iw \, r \, \text{Im} f_{t-s} (z)} \\ &\hspace{3cm} \times \tau_s \left( W(f_{t-s})\right) \left[ \tau_s \left( e^{i w q_z} \right) , W(g) \right] \cU (s;t)\,.
\end{split}
\end{equation*}
Taking the norm, using the unitarity of $\cU (s;t)$) and assuming $t\geq 0$ for convenience, we obtain
\begin{equation}\label{eq:nor}
\begin{split}
\Big\| \Big[ \tau_t \left(W (f)\right)  , W(g) \Big] \Big\| \leq \; & \Big\| \Big[ \tau_t^h \left(W (f)\right) , W(g) \Big] \Big\| \\ &+  \sum_{z \in \Lambda_L} \int_0^t \rd s \; |\text{Im} f_{t-s} (z)|  \int \rd w |\widehat{V'} (w)| \, \Big\| \left[  \tau_s \left( e^{iw q_z} \right) , W(g) \right] \Big\|\,.
\end{split}
\end{equation}
For any $\epsilon >0$, it is clear from
(\ref{eq:gharmbd}) that we have
\begin{equation*}
\begin{split}
\Big\| \Big[\tau^h_t \left( W (f) \right) , W(g) \Big] \Big\|  & \leq \,  (2 + c_{\omega, \lambda}
e^{\frac{( \mu + \epsilon)}{2}} + c_{\omega, \lambda}^{-1} ) \, \| f \|_{\infty} \, \| g \|_{\infty}
\, e^{(\mu+\epsilon) v_h (\mu+\epsilon)\, t} \, \sum_{x \in X, y \in Y} e^{-(\mu + \epsilon)d(x,y)} \\
& \leq \, C \, \| f \|_{\infty} \, \| g \|_{\infty} \, e^{\tilde{v} \, t } \,
\sum_{x \in X, y \in Y} F_{\mu}(d(x,y)),
\end{split}
\end{equation*}
where we have set $\tilde{v}= (\mu+\epsilon) v_h (\mu+\epsilon)$. Similarly, the bound
\begin{equation}\label{eq:ft-s}
|\text{Im} f_{t-s} (z)| \, \leq \, C \, \| f \|_{\infty} \, e^{\tilde{v} (t-s)}\,
\sum_{x \in X} F_{\mu} (d(z,x)) ,
\end{equation}
follows from an argument as in (\ref{eq:gharmbd}), for all $0 \leq s \leq t$.
Plugging these observations into (\ref{eq:nor}), we find that
\begin{equation*}
\begin{split}
\Big\| \Big[ \tau_t (W (f)) &  , W(g) \Big] \Big\| \\ \leq \; &   C \, \| f \|_{\infty} \, \| g \|_{\infty} \,
e^{\tilde{v} \, t } \, \sum_{x \in X, y \in Y} F_{\mu}(d(x,y)) \\ &+ \,  C \, \| f \|_{\infty} \,
\sum_{z \in \Lambda_L}
\sum_{x \in X} F_{\mu}(d(z,x)) \, \int \rd w \, |\widehat{V'} (w)| \, \int_0^t \rd s \, e^{\tilde{v} (t-s)}  \,
\Big\| \left[  \tau_s \left( e^{iw q_z} \right) , W(g) \right] \Big\|\,.
\end{split}
\end{equation*}
Iterating this inequality $m$ times we obtain
\begin{equation}\label{eq:expa}
\begin{split}
\Big\| \Big[ \tau_t &\left(W (f)\right)  , W(g) \Big] \Big\| \\ \leq
\; & C \, \| f \|_{\infty} \, \| g \|_{\infty} \, e^{\tilde{v} \, t } \, \sum_{x \in X, y \in Y} F_{\mu}(d(x,y)) \\ &+ C \| f \|_{\infty} \| g \|_{\infty} e^{\tilde{v} t} \sum_{x \in X, y \in Y} \sum_{n=1}^{m} \frac{(Ct)^n}{n!} \left( \prod_{j=1}^n \int d w_j |w_j| | \widehat{V'}(w_j)| \right) \\ &\hspace{1.5cm} \times
\sum_{z_1, \dots,z_n \in \Lambda_L} F_{\mu}(d(x,z_1)) \, F_{\mu}(d(z_1,z_2)) \dots F_{\mu}(d(z_n ,y)) \\
&+ \, C^{m+1} \, \| f \|_{\infty} \, \sum_{x \in X}  \,  \left( \prod_{j=1}^m \int d w_j |w_j| | \widehat{V'}(w_j)| \right)  \int_0^t d s_1 \int_0^{s_1} d s_2 \dots \int_0^{s_{m}} d s_{m+1} \\
&\hspace{1.5cm} \times  \sum_{z_1, \dots,z_{m+1} \in \Lambda_L}  F_{\mu}(d(x,z_1)) \, F_{\mu}(d(z_1,z_2|) \dots
F_{\mu}(d(z_m  , z_{m+1}))  \\
&\hspace{3cm} \times  \int \rd w_{m+1} |\widehat{V'} (w_{m+1})| \, e^{\tilde{v}(t-s_{m+1})} \, \Big\| \Big[ \tau_{s_{m+1}} (e^{iw_{m+1} q_{z_{m+1}}} ) , W(g) \Big] \Big\|\,.
\end{split}
\end{equation}
Using (\ref{eq:Fconv}), we find that
\begin{equation*}
\sum_{z_1, \dots, z_n \in \Lambda_L}  F_{\mu} (d(x,z_1)) F_{\mu} (d(z_1 , z_2)) \dots
F_{\mu} (d(z_n,y)) \, \leq \, C_{\nu}^{n} \, F_{\mu}(d(x,y)).
\end{equation*}
As for the error term in (\ref{eq:expa}), we can use the a-priori bound $\|[ \tau_{s_{m+1}} (e^{iw_{m+1} q_{z_{m+1}}} ) , W(g)]\| \leq 2$ to obtain
\begin{equation*}
\begin{split}
2 \, \| f \|_{\infty} \, e^{\tilde{v} \, t} \, \| \widehat{V'} \|_1 \, C \, t & \frac{(C \, \kappa_V \, C_{\nu} \, t )^m}{(m+1)!} \, \sum_{x \in X} \sum_{z_{m+1} \in \Lambda_L} F_{\mu}(d(x,z_{m+1}))   \\
& \leq \,  2 \, \| f \|_{\infty} \, e^{\tilde{v} \, t} \, \| \widehat{V'} \|_1 \, C \, t  \frac{(C \, \kappa_V \, C_{\nu} \, t )^m}{(m+1)!} |X| \sum_{z \in \mathbb{Z}^{\nu}} F_{\mu}(|z|).
\end{split}
\end{equation*}
{F}rom (\ref{eq:expa}), we now conclude that
\begin{equation*}
\begin{split}
\Big\| \Big[ \tau_t \left(W (f)\right)  , W(g) \Big] \Big\|  \leq \; & C\, \| f \|_{\infty} \| g \|_{\infty} \, e^{\tilde{v} t} \, \sum_{x \in X, y \in Y} F_{\mu}(d(x,y))  \, \sum_{n \geq 0} \frac{ (C \, \kappa_V \, C_{\nu} \,t)^n}{n!} \\& +
2 \, \| f \|_{\infty} \, e^{\tilde{v} \, t} \, \| \widehat{V'} \|_1 \, C \, t  \frac{(C \, \kappa_V \, C_{\nu} \, t )^m}{(m+1)!} |X| \sum_{z \in \mathbb{Z}^{\nu}} F_{\mu}(|z|) \\
\leq \; & C\, \| f \|_{\infty} \| g \|_{\infty} \, e^{(\tilde{v} + C \, \kappa_V \, C_{\nu} ) t} \, \sum_{x \in X, y \in Y} F_{\mu}(d(x,y)) \\ & +
2 \, \| f \|_{\infty} \, e^{\tilde{v} \, t} \, \| \widehat{V'} \|_1 \, C \, t  \frac{(C \, \kappa_V \, C_{\nu} \, t )^m}{(m+1)!} |X| \sum_{z \in \mathbb{Z}^{\nu}} F_{\mu}(|z|)  \, .
\end{split}
\end{equation*}
Since this is true for every $m \geq 0$, and since the last term converges to zero as $m \to \infty$, the theorem follows.
\end{proof}

\begin{remark}
Exactly the same proof yields the Lieb-Robinson bounds (\ref{eq:anharm}) for the Hamiltonian \[ \widehat{H}_L = \sum_{x \in \Lambda_L} p_x^2 \, + \, \omega^2 \, q_x^2 \, + \, \sum_{x \in \Lambda_L} \sum_{j = 1}^{\nu}  \lambda_j \, (q_{ x } - q_{ x + e_j})^2 \, +\,  \sum_{x \in \Lambda_L} V (p_x)\,. \] 
Moreover, one can see from the proof that the on-site nature of the anharmonic perturbation does not play an important role here. For example the same technique can be used to establish Lieb-Robinson bounds for the dynamics generated by the Hamiltonian \[ \widetilde{H}_L = \sum_{x \in \Lambda_L} p_x^2 \, + \, \omega^2 \, q_x^2 \, + \,\sum_{x \in \Lambda_L} \sum_{j = 1}^{\nu}  \lambda_j \, (q_{ x } - q_{ x + e_j})^2 \, +\,  \sum_{x \in \Lambda_L} \sum_{j = 1}^{\nu}  \left( V_1 (q_x - q_{x+e_j})  + V_2 (p_x - p_{x+e_j}) \right) \] if both $V_1$ and $V_2$ satisfy the assumption (\ref{eq:Vassu}).
\end{remark}

\setcounter{equation}{0}

\section{Discussion} \label{sec:disc}

\subsection{Other Observables}\label{sec:other_obs}

Theorems \ref{thm:harlrb} and \ref{thm:anharm} give a Lieb-Robinson bound for Weyl operators of
the form
\begin{equation}\label{eq:gen_lrb}
\Vert [\tau_t(W(f)), W(g)]\Vert \leq C \Vert f\Vert_\infty \Vert g\Vert_\infty e^{-\mu(d(X,Y)-v |t|)}
\end{equation}
for $f$ and $g$ supported on finite subsets $X$ and $Y$ of the lattice, where $\tau_t$ is the
dynamics of a harmonic or anharmonic lattice system that satisfies the conditions of these
theorems. From (\ref{eq:gen_lrb}) one can of course immediately obtain a bound for observables
$A$ and $B$ that are finite linear combinations of Weyl operators by a simple application of the
triangle inequality. Two other classes of observables for which we can obtain useful bounds are
worth mentioning.

Note that for every $f: X \to \bC$, $W(f)=e^{ib(f)}$,
with a self-adjoint operator $b(f)$ acting on $\mathcal{H}_X$ (\ref{eq:weyl}), such that $b(sf)=sb(f)$ for every $s \in \bR$. Let $\hat{A}, \hat{B}\in L^1(\mathbb{R})$ be two functions such that $s\hat{A}(s)$ and $s\hat{B}(s)$ are also in $L^1(\mathbb{R})$. Then, it is straightforward to derive a Lieb-Robinson bound for the observables $A(b(f))$ and $B(b(g))$ defined by
\begin{equation}
A(b(f)) = \int ds \hat{A}(s) W(sf), \quad B(b(g)) = \int ds \hat{B}(s) W(sg)\, .
\end{equation}
The result is
\begin{equation}\label{eq:Ahat_lrb}
\Vert [\tau_t(A(b(f))), B(b(g))]\Vert \leq C
\Vert f\Vert_\infty \Vert g\Vert_\infty e^{-\mu (d(X,Y)-v |t|)}\int ds |s\hat{A}(s)| \, \int ds |s\hat{B}(s)|
\end{equation}

By taking derivatives, we can also obtain a Lieb-Robinson bound for the unbounded observables $b(f)$ and $b(g)$
(e.g., $q_x$ and $p_x$). Because $b(f) $ and $b(g)$ are unbounded we apply the Lieb-Robinson bound first on
a common dense domain of analytic vectors (see \cite[Lemma 5.2.12]{bratteli1997}), and find that the
commutator $[\tau_t(b(f)),b(g)]$ has a bounded extension with the following norm bound
\begin{equation}\label{eq:norm_bound}
\Vert [\tau_t(b(f)), b(g)]\Vert \leq C \Vert f\Vert_\infty \Vert g\Vert_\infty e^{-\mu (d(X,Y)-v |t|)}\, .
\end{equation}

\subsection{Exponential Clustering Theorem}\label{sec:exp_clustering}

For a large class of quantum spin systems it was recently proven that a non-vanishing spectral gap
implies exponential decay of spatial correlations in the ground state \cite{NS1,HK,NSP}. Such a result is
often referred to as the Exponential Clustering Theorem. The locality property of the dynamics
provided by a Lieb-Robinson bound is one of the main ingredients in the proof of this result.
In the harmonic case, the clustering properties of the exact ground state can be explicitly
analyzed \cite{cramer2006,schuch2006}, and indeed one finds exponential
decay whenever there is a non-vanishing gap. For the harmonic systems considered here, the gap is non-vanishing iff $\omega>0$. The results of this paper can be used to prove an exponential
clustering theorem for the class of anharmonic lattice systems we consider here. In fact,
following the method of \cite{NSP} (see also \cite{NS1,HK}), the only additional estimate
needed is the following short-time bound.
\begin{lemma}
Let $H_L$ be the Hamiltonian acting on $\Lambda_L = (-L,L]^{\nu} \subset \bZ^{\nu}$ defined in (\ref{eq:anharham}), and $\tau_t^L$ the time-evolution generated by $H_L$. Let $f,g :\Lambda_L \to \bR$ with $\supp f \subset X$, $\supp g \subset Y$, and $X \cap Y = \emptyset$. Then there exists a constant $C= C (\lambda,\omega,\kappa_V) <\infty$ such that
\begin{equation}\label{eq:short_time}
\Vert [\tau_t(W(f)), W(g)]\Vert \leq C \, |t| \, \min (|X|, |Y|) \; \vert \Vert f \Vert_\infty \Vert g\Vert_\infty
\end{equation}
for all $|t| < t_0 (\lambda,\omega,\kappa_V)$.
\end{lemma}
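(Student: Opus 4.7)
The plan is to combine the interpolation identity from the proof of Theorem~\ref{thm:anharm} with the observation that $[W(f), W(g)] = 0$ whenever $X \cap Y = \emptyset$, so that the commutator vanishes at $t=0$ and every term in the estimate can be made to carry a factor of $|t|$. Concretely, I would restart from the inequality (\ref{eq:nor}), namely
\[
\|[\tau_t^L(W(f)), W(g)]\| \leq \|[\tau_t^h(W(f)), W(g)]\| + \sum_{z \in \Lambda_L} \int_0^t \!\! ds\, |\mathrm{Im}\, f_{t-s}(z)| \int \! dw\, |\widehat{V'}(w)|\, \|[\tau_s^L(e^{iw q_z}), W(g)]\|,
\]
and estimate the two terms on the right separately, with constants independent of $X$, $Y$, and $L$.

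For the harmonic contribution, I would use the small-time estimate (\ref{eq:smalltime}) from the proof of Lemma~\ref{lm:Htx}. Since $X \cap Y = \emptyset$ forces $|y-x|\geq 1$ for all $x\in X$, $y\in Y$, the Fourier sums $h_{m,t-s}^{(L)}(y-x)$ obey $|h_{m,t}^{(L)}(y-x)| \leq C(\omega,\lambda)\,(2c_{\omega,\lambda}|t|)\, e^{2c_{\omega,\lambda}|t|}$ times a rapidly decaying factor in $|y-x|$, so summing over $y\in Y$ (or $x\in X$, by symmetry) yields a convergent bound and extracts a factor of $|t|$. Combined with (\ref{eq:thmharm1}), this gives $\|[\tau_t^h(W(f)), W(g)]\| \leq C_1\, |t|\, \|f\|_\infty\|g\|_\infty \min(|X|,|Y|)$ once $|t|$ is below a threshold depending only on $\omega,\lambda$.

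For the correction term, the key step is to bound the inner commutator $\|[\tau_s^L(e^{iw q_z}), W(g)]\|$ by applying Theorem~\ref{thm:anharm} itself, noting that $e^{iwq_z} = W(w\delta_z)$ is a Weyl operator supported on $\{z\}$ with $\|w\delta_z\|_\infty = |w|$. This yields
\[
\|[\tau_s^L(e^{iw q_z}), W(g)]\| \leq C\, |w|\, \|g\|_\infty\, e^{(\mu+\epsilon)v|s|} \sum_{y \in Y} F_\mu(d(z,y)).
\]
The factor $|w|$ is precisely what makes $dw$-integral against $|\widehat{V'}(w)|$ finite by the assumption $\kappa_V<\infty$ in (\ref{eq:Vassu}). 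Using the decay bound $|\mathrm{Im}\, f_{t-s}(z)| \leq C_2 \|f\|_\infty \sum_{x \in X} e^{-\mu(|z-x|-v(t-s))}$ from Lemma~\ref{lm:Htx}, the convolution inequality (\ref{eq:Fconv}), and the bound $\sup_x \sum_y F_\mu(d(x,y)) < \infty$, the $s$-integral contributes a factor $|t|$ and the spatial sum collapses to $\min(|X|,|Y|)$. Here, to obtain $|Y|$ rather than just $|X|$, I would invoke the commutator symmetry $\|[\tau_t^L(W(f)),W(g)]\| = \|[\tau_{-t}^L(W(g)),W(f)]\|$ and repeat the argument with the roles of $f$ and $g$ exchanged.

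The main obstacle I anticipate is ensuring that the threshold $t_0$ and the constant $C$ depend only on $\lambda,\omega,\kappa_V$, and not on $X$, $Y$, or $L$. This requires that every exponential factor $e^{(\mu+\epsilon)v|t|}$ and every convolution constant from Lemma~\ref{lm:Htx} and (\ref{eq:Fconv}) be controlled uniformly in $L$; this is already guaranteed by the $L$-independent bounds proved in Lemma~\ref{lm:htx}. A subtler issue is that Theorem~\ref{thm:anharm}, which is used as a black box inside the correction, itself carries an exponential-in-$|t|$ prefactor, but this is harmless since we are restricting to $|s|\leq |t|\leq t_0$.
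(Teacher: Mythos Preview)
Your proposal is correct and follows essentially the same route as the paper's proof: start from the interpolation inequality (\ref{eq:nor}), use the small-time estimate (\ref{eq:smalltime}) to extract a factor $|t|$ from the harmonic term (since $X\cap Y=\emptyset$ forces $|x-y|\geq 1$), and control the correction term by bounding $|\mathrm{Im}\,f_{t-s}(z)|$ via (\ref{eq:ft-s}) and the inner commutator $\|[\tau_s^L(e^{iwq_z}),W(g)]\|$ via the anharmonic Lieb--Robinson bound (the paper cites Corollary~\ref{cor:anharm}, you cite Theorem~\ref{thm:anharm}, which amounts to the same thing), so that the $s$-integral supplies the needed factor $|t|$ and the $w$-integral is finite by $\kappa_V<\infty$. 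The symmetry argument you give to obtain $\min(|X|,|Y|)$ is the natural one and is implicit in the paper as well.
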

\begin{proof}
Let $H^{(m)}_L (t,x)$, for $m=0,\pm 1$, be the Fourier sums defined in (\ref{eq:Fou}). {F}rom (\ref{eq:smalltime}), we obtain that, for arbitrary $\mu >0$, 
\[ |H^{(0)} (t,x)| \leq (2c_{\omega,\lambda} |t|) \frac{(2c_{\omega, \lambda}|t|)^{2|x|-1}}{(2|x|!)} \leq c_{\omega,\lambda} \, |t| \, e^{(\mu/2)+1} \, e^{-\mu (|x| - \frac{2c_{\omega,\lambda}}{\mu} |t|)}  \] for all $|x| \geq 1$ and $|t| < e^{-(\mu/2)-1} c^{-1}_{\lambda,\omega}$. Since similar estimates hold for $H^{(1)}$ and $H^{(-1)}$ as well, we find, analogously to (\ref{eq:gharmbd}), that, if $\tau_t^h$ denotes the harmonic time-evolution generated by the Hamiltonian (\ref{eq:harham}), 
\begin{equation}\label{eq:lintharLR}
\begin{split}
\left\| \left[ \tau_t^h \left( W(f) \right), W(g) \right] \right\| \, \leq  &\,  C \, |t|  \, \| f \|_{\infty} \|g \|_{\infty} \, \sum_{x\in X, y\in Y} \,
e^{-\mu \left( d(x,y) - \frac{2c_{\omega,\lambda}}{\mu} |t| \right)} \\ \leq  &\,  C \, |t|  \, \| f \|_{\infty} \|g \|_{\infty} \min (|X|, |Y|)
\end{split}
\end{equation}
for all $|t| < e^{-(\mu/2)-1} c^{-1}_{\omega,\lambda}$ (using the assumption that $X \cap Y = \emptyset$), and for a constant $C$ depending only on $\lambda$ and $\omega$.

\medskip

Next we consider the anharmonic time evolution $\tau_t \equiv \tau_t^L$. {F}rom (\ref{eq:nor}), it follows that 
\begin{equation}
\begin{split}
\Big\| \Big[ \tau_t \left(W (f)\right)  , W(g) \Big] \Big\| \leq \; & \Big\| \Big[ \tau_t^h \left(W (f)\right) , W(g) \Big] \Big\| \\ &+  \sum_{z \in \Lambda_L} \int_0^t d s \; |\text{Im} f_{t-s} (z)|  \int d w |\widehat{V'} (w)| \, \Big\| \left[  \tau_s \left( e^{iw q_z} \right) , W(g) \right] \Big\|\,.
\end{split}
\end{equation}
Applying (\ref{eq:lintharLR}) to bound the first term, (\ref{eq:ft-s}) and Corollary \ref{cor:anharm} to bound the second term, we find
\begin{equation}
\begin{split}
\Big\| \Big[ \tau_t \left(W (f)\right)  , W(g) \Big] \Big\| \leq \; & C \, |t|  \, \| f \|_{\infty} \|g \|_{\infty} \min (|X|, |Y|) 
\end{split}
\end{equation}
for a constant $C$ depending only on $\lambda,\omega$ and on the constant $\kappa_V$ defined in (\ref{eq:Vassu}), and for all $|t|$ sufficiently small (depending on $\lambda$, $\omega$, and $\kappa_V$). 
\end{proof} 
 
As a consequence of these considerations one obtains the following theorem.

\begin{theorem}
Let $H$ be the Hamiltonian of a harmonic or anharmonic lattice model satisfying the
conditions of Theorem \ref{thm:harlrb} or \ref{thm:anharm}, and suppose $H$ has a unique
ground state $\Omega$ and a spectral gap $\gamma$ above the ground state. Denote
by $\langle \,\cdot\,\rangle$ the expectation in the state $\Omega$. Then, for any functions
$f$ and $g$ with supports $X$ and $Y$ in the lattice we have the following estimate:
\begin{equation}
\left| \langle W(f) W(g)\rangle - \langle W(f) \rangle \langle W(g)\rangle\right|
\leq C \Vert f\Vert_\infty \, \Vert g\Vert_\infty\,  \Vert \widehat{V^\prime}\Vert_1 \min(|X|,|Y|)
e^{-d(X,Y)/\xi}
\end{equation}
where $\mu\geq 1$ and $\epsilon>0$ are as in Theorem \ref{thm:anharm}
and $\xi$ can be taken to be
\begin{equation}
\xi = \frac{2(\mu+\epsilon)v(\mu+\epsilon) +\gamma}{\mu\gamma}
\end{equation}
and where, if we assume $d(X,Y)\geq \xi$, $C$ is a constant depending only on the lattice.
\end{theorem}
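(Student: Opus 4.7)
The approach follows the method of Nachtergaele--Sims--Pastur \cite{NSP}, adapted to the present unbounded setting. Writing $P = |\Omega\rangle\langle\Omega|$ and $P^\perp = 1 - P$, we have
\[
\langle W(f)W(g)\rangle - \langle W(f)\rangle\langle W(g)\rangle = \langle \Omega, W(f)\, P^\perp\, W(g)\,\Omega\rangle,
\]
so the task is to extract exponential decay in $d(X,Y)$ by combining the spectral gap $\gamma$, the Lieb--Robinson bound of Theorem~\ref{thm:anharm}, and the linear-in-$t$ short-time estimate~(\ref{eq:short_time}) just proved.

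For a parameter $\alpha>0$, I would introduce a complex weight $w_\alpha(t)$ of the regularized principal-value type used in \cite{NS1,HK,NSP}, built from a Gaussian of width $\sqrt{\alpha}$, whose Fourier transform $\widehat{w_\alpha}(E)$ is essentially $-i$ on $[\gamma,\infty)$ and is suppressed like $e^{-E^2/(4\alpha)}$ away from it. Using the gap assumption $\mathrm{spec}(H-E_0)\subset\{0\}\cup[\gamma,\infty)$ and the unitarity of Weyl operators (so $W(f)\Omega$ lies in the domain of bounded functions of $H$), one obtains the spectral identity
\[
\int_{\mathbb{R}} dt\, w_\alpha(t)\,\langle\Omega, [\tau_t^L(W(f)),W(g)]\Omega\rangle \, = \, i\,\langle\Omega, W(f)P^\perp W(g)\Omega\rangle + R_\alpha,
\]
with a spectral remainder $|R_\alpha|\lesssim \|f\|_\infty\|g\|_\infty\, e^{-\gamma^2/(16\alpha)}$. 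The next step is to split the time integral at $T := d(X,Y)/v(\mu+\epsilon)$, just inside the Lieb--Robinson horizon. The piece $|t|>T$ is controlled by the trivial bound $\|[\cdot,\cdot]\|\leq 2$ and the Gaussian tail of $w_\alpha$, yielding a contribution of order $e^{-T^2/(4\alpha)}$; for $|t|\leq T$ away from zero, Corollary~\ref{cor:anharm} gives exponential smallness uniform in $|t|$; and for $|t|$ near zero I would invoke~(\ref{eq:short_time}), whose linear-in-$t$ behavior makes the integral against $w_\alpha$ finite at the origin while still carrying only a single power of $\min(|X|,|Y|)$.

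Optimizing $T$ and $\alpha$ so as to balance $e^{-T^2/(4\alpha)}$ against the Lieb--Robinson decay $e^{-\mu(d(X,Y)-v(\mu+\epsilon)T)}$ produces, after elementary algebra, the correlation length $\xi=(2(\mu+\epsilon)v(\mu+\epsilon)+\gamma)/(\mu\gamma)$ announced in the statement; the hypothesis $d(X,Y)\geq\xi$ ensures the optimizer lies in the regime where both estimates are valid. The factor $\|\widehat{V'}\|_1$ enters through the $V$-dependence of the constants in Theorem~\ref{thm:anharm} and in $v(\mu+\epsilon)$. I expect the main technical obstacle to be the small-$|t|$ region: without the linear estimate~(\ref{eq:short_time}) the contribution of $w_\alpha(t)$ near $t=0$ evaluated against the Lieb--Robinson bound would carry an uncompensated volume factor, which is precisely why the preceding lemma had to be established as a prerequisite. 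A secondary point requiring care is the justification of the spectral identity on unbounded domains, which however reduces to a routine approximation argument since $W(f),W(g)$ are unitary.
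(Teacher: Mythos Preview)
Your proposal is correct and follows exactly the approach the paper indicates: the paper does not spell out a proof of this theorem but simply states that it follows ``following the method of \cite{NSP} (see also \cite{NS1,HK})'' once the short-time lemma~(\ref{eq:short_time}) is in hand, and your sketch is a faithful unpacking of that method---the Gaussian-smeared spectral identity, the split of the time integral at the Lieb--Robinson horizon, and the optimization in $\alpha$ and $T$ producing the stated $\xi$. One cosmetic correction: the reference \cite{NSP} is Nachtergaele--Sims, not Nachtergaele--Sims--Pastur.
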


It is straightforward to see that the same bound holds for infinite systems if the
corresponding GNS Hamiltonian has a unique ground state and a spectral gap above it,
and the infinite system is the thermodynamic limit of finite systems that satisfy the conditions
of Theorem \ref{thm:harlrb} or \ref{thm:anharm}.

\section*{Acknowledgements}

This article is based on work supported in part by the U.S. National Science
Foundation under Grant \# DMS-0605342 (B.N. and R.S.). H.R. received support
from NSF Vigre grant \#DMS-0135345. B.S. is on leave from University of Cambridge; his research is supported by a Sofja Kovalevskaja Award of the Humboldt Foundation.

\thebibliography{hh}

\bibitem{alsaidi2003}
W. A. Al-Saidi, and D. Stroud,
{\it Phase phonon spectrum and melting in a quantum rotor model with diagonal disorder},
Phys Rev B {\bf 67}, 024511 (2003)

\bibitem{aoki2006}
K. Aoki, J. Lukkarinen, and H. Spohn,
{\it Energy Transport in Weakly Anharmonic Chains},
J. Stat. Phys. {\bf 124}, 1105 (2006).

\bibitem{bonetto2000}
F. Bonetto, J.L. Lebowitz, and L. Rey-Bellet,
{\it Fourier's Law: a Challenge to Theorists},
in A. Fokas et al. (Eds), Mathematical Physics 2000,
Imperial College Press, London, 2000, pp 128--150.

\bibitem{bratteli1987}
O.~Bratteli and D.~Robinson,
{\it Operator Algebras and Quantum Statistical Mechanics 1\/},
Second Edition. Springer Verlag, 1987.

\bibitem{bratteli1997}
O.~Bratteli and D.~Robinson,
{\it Operator Algebras and Quantum Statistical Mechanics 2\/},
Second Edition. Springer Verlag, 1997.

\bibitem{bravyi2006}
S. Bravyi, M. B. Hastings, F. Verstraete,
{\it Lieb-Robinson bounds and the generation of correlations and topological quantum order},
Phys. Rev. Lett. {\bf 97}, 050401 (2006)

\bibitem{B}
O. Buerschaper,
{\it Dynamics of correlations and quantum phase transitions in bosonic lattice systems},
Diploma Thesis, Ludwig-Maximilians University, Munich, 2007.

\bibitem{BCDM}
P. Butt{\`a}, E. Caglioti, S. Di Ruzza, and C. Marchioro,
{\it On the propagation of a perturbation in an anharmonic system},
J. Stat. Phys. {\bf 127}, no. 2, (2007), 313--325.

\bibitem{cramer2006}
 M. Cramer and J. Eisert,
{\it Correlations, spectral gap, and entanglement in harmonic quantum
systems on generic lattices},
New J. Phys. {\bf 8}, 71 (2006)

\bibitem{eisert2006}
J. Eisert and  T. J. Osborne,
{\it General entanglement scaling laws from time evolution},
Phys. Rev. Lett. {\bf 97}, 150404 (2006)

\bibitem{gregor2006}
K. Gregor, D.A. Huse, and S. L. Sondhi,
{\it Spin-nematic order in the frustrated pyrochlore-lattice quantum rotor model},
Phys Rev B {\bf 74}, 024425 (2006)

\bibitem{H} M.B. Hastings,
{\it Lieb-Schultz-Mattis in higher dimensions},
Phys. Rev. B {\bf 69} (2004), 104431.

\bibitem{hastings2007}
M.B. Hastings,
{\it An Area Law for One Dimensional Quantum Systems},
JSTAT, P08024 (2007).

\bibitem{HK} M.B. Hastings and T. Koma,
{\it Spectral gap and exponential decay of correlations},
Comm. Math. Phys. {\bf 265} no. 3, (2006), 781--804.

\bibitem{LLL77}
O.E. Lanford, J. Lebowitz, E. H. Lieb,
{\it Time evolution of infinite anharmonic systems},
J. Statist. Phys. {\bf 16} no. 6, (1977), 453--461.

\bibitem{LR1} E.H. Lieb and D.W. Robinson,
{\it The finite group velocity of quantum spin systems},
Comm. Math. Phys. {\bf 28} (1972), 251--257.

\bibitem{MPPT}
C. Marchioro, A. Pellegrinotti, M. Pulvirenti, and L. Triolo,
{\it Velocity of a perturbation in infinite lattice systems},
J. Statist. Phys. {\bf 19}, no. 5, (1978), 499--510.

\bibitem{NS1} B. Nachtergaele and R. Sims,
{\it Lieb-Robinson bounds and the exponential clustering theorem},
Comm. Math. Phys. {\bf 265},
(2006) 119--130.

\bibitem{NOS1} B. Nachtergaele, Y. Ogata, and R. Sims,
{\it Propagation of Correlations in Quantum Lattice Systems},
J. Stat. Phys. {\bf 124}, no. 1, (2006) 1--13.

\bibitem{NS2}  B. Nachtergaele and R. Sims,
{\it A multi-dimensional Lieb-Schultz-Mattis Theorem},
Comm. Math. Phys. {\bf 276}, (2007) 437--472.

\bibitem{NSP}  B. Nachtergaele and R. Sims,
{\it Locality in Quantum Spin Systems},
arxiv:0712.3318

\bibitem{polak2007}
T. P. Polak and T. K. Kope\'{c},
{\it Quantum rotor description of the Mott-insulator transition in the Bose-Hubbard model},
Phys Rev B, {\bf 76} 094503 (2007)

\bibitem{schuch2006}
N. Schuch, J. I. Cirac, M. M. Wolf,
{\it Quantum states on Harmonic lattices},
Commun. Math. Phys. {\bf 267}, 65 (2006)

\bibitem{spohn2006}
H. Spohn,
{\it The Phonon Boltzmann Equation, Properties and Link to Weakly
Anharmonic Lattice Dynamics},
J. Stat. Phys. {\bf 124} (2006), 1041--1104.

\end{document}